\newtheorem{theorem}{Theorem}[section]
\newtheorem{corollary}[theorem]{Corollary}
\newtheorem{lemma}[theorem]{Lemma}
\newtheorem{remark}[theorem]{Remark}
\newtheorem{proposition}[theorem]{Proposition}
\numberwithin{theorem}{section}
\numberwithin{equation}{section}
\newcommand{\ds}{\displaystyle}
\newcommand{\myop}[1]{\operatorname{#1}}
\newcommand{\Tr}{\myop{Tr}} 
\newcommand{\id}{\myop{id}}
\newcommand{\Tt}{\mathcal{T}}
\newcommand{\Real}{\mathbb R}
\newcommand{\set}[1]{\left\{#1\right\}}
\newcommand{\la}{\langle}
\newcommand{\ra}{\rangle}
\newcommand{\Comp}{\mathbb{C}}
\newcommand{\K}{\mathcal{K}}
\newcommand{\Le}{\mathcal{L}}
\newcommand{\U}{\mathcal{U}}
\newcommand{\Om}{\Omega}
\newcommand{\om}{\omega}
\newcommand{\M}[1]{M_{#1}(\mathbb{C})}
\newcommand{\pos}{\mathbf{P}}
\newcommand{\sch}{\mathbf{Sch}}
\newcommand{\BP}{\mathbf{BP}}
\newcommand{\SEP}{\mathbf{SEP}}
\newcommand{\PO}{\mathcal{POS}}
\newcommand{\CP}{\mathcal{CP}}
\newcommand{\SP}{\mathcal{SP}}
\newcommand{\EB}{\mathcal{EB}}
\newcommand{\D}{\mathbf{D}}
\newcommand{\Inv}{{\rm Inv}}
\newcommand{\Cov}{{\rm Cov}}
\newcommand{\Covn}{{\rm Cov}_1}
\begin{document}

\title[$k$-positivity and Schmidt number]{$k$-positivity and Schmidt number under orthogonal group symmetries
}

\author[S.-J. Park]{Sang-Jun Park}
\address{Sang-Jun Park, 
Department of Mathematical Sciences, Seoul National University, 
Gwanak-Ro 1, Gwanak-Gu, Seoul 08826, Republic of Korea}
\email{psj05071@snu.ac.kr}

\author[S.-G. Youn]{Sang-Gyun Youn}
\address{Sang-Gyun Youn, 
Department of Mathematics Education, Seoul National University, 
Gwanak-Ro 1, Gwanak-Gu, Seoul 08826, Republic of Korea}
\email{s.youn@snu.ac.kr }

\maketitle

\begin{abstract}

In this paper, we study $k$-positivity and Schmidt number under standard orthogonal group symmetries. The Schmidt number is a natural quantification of entanglement in quantum information theory. First of all, we exhibit a complete characterization of all orthogonally covariant $k$-positive maps. This generalizes earlier results in \cite{Tom85}. Furthermore, we optimize duality relations between $k$-positivity and Schmidt numbers under compact group symmetries. This new framework enables us to efficiently compute the Schmidt numbers of all orthogonally invariant quantum states.
\end{abstract}

\section{Introduction}

The study of positivity has a rich historical background with significant applications in operator algebra and quantum theory. One prominent application is a characterization of quantum channels, which describe quantum evolutions in open quantum systems interacting with their environments. The mathematical structure of quantum channels is provided by completely positive trace-preserving (CPTP) linear maps in the Schr{\"o}dinger picture. In this paper, we will focus on finite-dimensional quantum systems. Note that, if $d=\min\left\{{\rm dim}(H_A),{\rm dim}(H_B)\right \}$, then a linear map $\Le:B(H_A)\rightarrow B(H_B)$ is CP if and only if $\Le$ is $d$-positive, i.e.
\begin{equation}
    {\rm id}_{d}\otimes \Le: M_d\otimes B(H_A)\rightarrow M_d\otimes B(H_B)
\end{equation}
is positive, where $M_d$ is the set of all linear operators acting on $\Comp^d$. An intermediate concept between positivity and complete positivity is the so-called {\it $k$-positivity} ($1\leq k\leq d$), which means that
\begin{equation}
    {\rm id}_{k}\otimes \Le: M_k\otimes B(H_A)\rightarrow M_k\otimes B(H_B)
\end{equation}
is positive. While extensive research has been conducted on $k$-positivity \cite{TT83, Tom85, CKL92, CK09, JK10, COS18, HLLMH18}, it is hard to expect an efficient method to verify whether a given linear map is $k$-positive or not in general.

On the other hand, quantum entanglement is a fundamental phenomenon in quantum science and serves as a crucial resource in quantum information processing \cite{HHHH09}. Quantifying quantum entanglement is a central issue in this field, and it has been revealed that $k$-positivity is closely linked to quantum entanglement via the so-called \textit{Schmidt number} \cite{TH00}. Specifically, the Schmidt number of a quantum state $\rho$ is greater than $k$ if and only if there exists a certain $k$-positive linear map $\Le$ such that $(\id \otimes \Le)(\rho)$ is not positive semidefinite. In this context, $k$-positive maps can be considered \textit{Schmidt number witnesses} \cite{SBL01}. While various attempts have been made to obtain lower and upper bounds for the Schmidt numbers \cite{TH00, CK09, YLT16, CYT17, HLLMH18, PV19, Car20, LHHGV23}, accurate computations still pose significant challenges. To the best of our knowledge, there are very few explicit examples where the Schmidt numbers can be precisely calculated in high-dimensional systems. Some known examples are the isotropic states \cite{HH99, TH00}, the Werner states \cite{Wer89, Kye23lect}, and recently in \cite{Car20}.


One crucial progress of this paper is a new framework for quantitative analysis of quantum entanglement generalizing the methodologies proposed in \cite{PJPY23}. The main focus of \cite{PJPY23} was to develop a universal framework to study the problem of whether a given state is entangled or not under compact group symmetry. In this paper, we extend their framework to cover more general questions of the Schmidt numbers (Theorem \ref{cor-invSch}). Indeed, we prove that a much smaller set of $k$-positive maps is sufficient as detectors to compute the Schmidt numbers under compact group symmetries. Furthermore, our abstract approach not only establishes the duality between $k$-positive maps and the Schmidt numbers but also provides more general duality results between {\it mapping cones} (Theorem \ref{thm-main1}).


The generalized duality results enable us to analyze $k$-positivity and the Schmidt number for quantum objects under the standard \textit{orthogonal group symmetries} \cite{VW01} (Theorem \ref{thm-OOkpos}, \ref{thm-OOSch}, and \ref{thm-OOSch2}). Specifically, we provide a complete characterization of $k$-positivity of all linear maps of the form 
\begin{equation}\label{eq03}
   \Le^{(d)}_{a,b}(Z)=(1-a-b)\frac{\text{Tr}(Z)}{d}I_d+aZ+bZ^{\top}
\end{equation}
and apply our duality results to compute the Schmidt numbers of all quantum states of the form
\begin{equation}\label{eq04}
\rho^{(d)}_{a,b}=\frac{1-a-b}{d^2}I_{d^2} +\frac{a}{d}\sum_{i,j=1}^d |ii\ra\la jj|+\frac{b}{d}\sum_{i,j=1}^d |ij\ra\la ji|.
\end{equation}
Note that \cite{Tom85,TH00} cover special cases $\Le^{(d)}_{a,0}$, $\Le^{(d)}_{0,b}$, and $\rho^{(d)}_{a,0}$ where all these subclasses are parametrized as $1$-dimensional spaces. 
To the best of our knowledge, our computations provide the first example of the complete characterization of Schmidt numbers in a non-trivial class parameterized by at least two real variables (in arbitrarily high dimensional settings).


To visualize the full characterization of the $k$-positivity of $\Le^{(d)}_{p,q}$ and the Schmidt number of $\rho^{(d)}_{a,b}$, let us denote by $\mathcal{P}^{(d)}_k$ the set of $k$-positive linear maps $\Le^{(d)}_{p,q}$, and by $\mathcal{S}^{(d)}_k$ the set of quantum states $\rho^{(d)}_{a,b}$ whose Schmidt number is less than or equal to $k$. Our main results reveal that the geometic structures of $\mathcal{P}^{(d)}_k$ and $\mathcal{S}^{(d)}_k$ are categorized into four distinct cases: (1) $k=1$, (2) $1<k\leq \frac{d}{2}$, (3) $\frac{d}{2}<k<d$, (4) $k=d$. For a special case $d=4$, we provide a visual representation of the convex sets $\mathcal{P}^{(4)}_k$ and $\mathcal{S}^{(4)}_k$ ($1\leq k\leq 4$) in Figure \ref{fig-OO4pos} below. Note that the regions are highly non-trivial to describe since conics are necessary for both $\mathcal{P}^{(4)}_3$ and $\mathcal{S}^{(4)}_3$.

\begin{figure}[htb!] 
    \centering
    \includegraphics[scale=0.37]{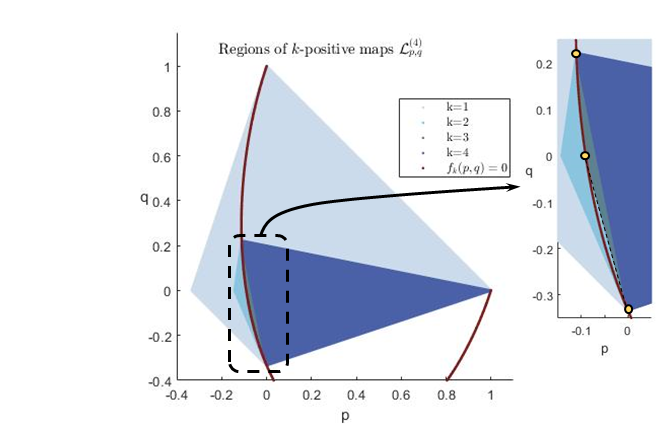}    \includegraphics[scale=0.30]{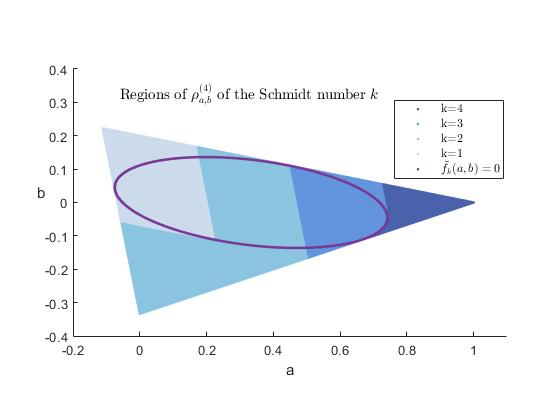}
    \caption{Regions of the $k$-positive maps $\mathcal{L}_{p,q}^{(4)}$ and quantum states $\rho^{(4)}_{a,b}$ of the Schmidt number $k$}
    \label{fig-OO4pos}
\end{figure}

\section{Preliminaries}

\subsection{Positive maps and Schmidt number} \label{sec-notations}

Let us fix some notations that are frequently used throughout this paper. First of all, we will use bracket notation from physics. All Hilbert space $H$ will be assumed to be finite-dimensional equipped with an inner product $\la \cdot |\cdot \ra$ and an orthonormal basis $\set{e_i}_{i=1}^d$. For $v,w\in H$, we denote by $|v\ra$ an operator mapping $\lambda\in \Comp$ to $\lambda v\in H$, and by $\la w | $ a linear functional mapping $\xi\in H$ to $\la w | \xi\ra\in \Comp$. Then $|v\ra\la w|$ is a rank one operator which maps $\xi\in H$ to $\la w|\xi\ra  |v\ra\in H$. In particular, we write $|i\ra:=|e_i\ra$ and $|ij\ra:=|i\ra\otimes |j\ra$. Then $\set{|i\ra\la j|}_{i,j=1}^d$ forms matrix units in $B(H)$. An operator $\rho\in B(H)$ is called a \textit{quantum state} if it is positive and $\Tr(\rho)=1$. We denote by $\D(H)$ the set of all quantum states.

For every vector $\xi\in H_A\otimes H_B$, there exists a Schmidt decomposition \cite{NiCh} $|\xi\ra=\sum_{i=1}^k{\lambda_i}|v_i\ra\otimes |w_i\ra$ where $\lambda_1\geq \cdots \geq \lambda_k>0$, and $\set{v_i}_{i=1}^k$ and $\set{w_i}_{i=1}^k$ are orthonormal subsets in $H_A$ and $H_B$, respectively. The numbers $k$ and $\set{\lambda_i}_{i=1}^k$ are uniquely determined, and  we call $k$ the \textit{Schmidt rank} of $\xi$ and write ${\rm SR}(|\xi\ra)=k$. Now we denote by $\pos_{AB}$ the set of positive operators on $H_A\otimes H_B$ and consider the following subsets
    $$\sch_{k,AB}:={\rm conv}\set{|\xi\ra\la \xi|: {\rm SR}(|\xi\ra)\leq k}$$
for any natural number $k$ (or simply write $\pos$ and $\sch_k$ when these cause no confusion). Then the \textit{Schmidt number} of a positive operator $X\in \pos_{AB}$ is defined as the smallest natural number $k$ such that $X\in \sch_k$, and we write ${\rm SN}(X)=k$. Note that ${\rm SN}(X)\leq \min(d_A,d_B)$ and $\sch_k=\pos_{AB}$ for all $k\geq \min(d_A,d_B)$. We employ another notation $\SEP_{AB}:=\sch_{1,AB}$ for the positive operators of the Schmidt number $1$, and $X\in \pos_{AB}$ is called \textit{separable} if $X\in \SEP_{AB}$ and \textit{entangled} otherwise.

Let us denote by $B(B(H_A),B(H_B))$ the set of all linear maps from $B(H_A)$ into $B(H_B)$ and by $B^h(B(H_A),B(H_B))$ the set of all \textit{Hermitian preserving maps}, i.e., $\Le\in B(B(H_A),B(H_B))$ with $\Le(Z)^*=\Le(Z^*)$ for $Z\in B(H_A)$. We also denote by $\PO_{AB}\subset B^{h}(B(H_A),B(H_B))$ the cone of positive maps from $B(H_A)$ into $B(H_B)$. A list of subclasses of positive maps of our interest is the following:
\begin{itemize}
    \item $\PO_{k,AB}$, the set of \textit{$k$-positive} maps (note that $\PO_1=\PO$),

    \item $\CP_{AB}$, the set of \textit{completely positive (CP)} maps, 

    \item $\SP_{k,AB}:={\rm conv}\set{{\rm Ad}_K: K\in B(H_B,H_A), {\rm rank}(K)\leq k}$, the set of \textit{$k$-superpositive} maps \cite{SSZ09}, where ${\rm Ad}_K(X):=KXK^*$,

    \item $\EB_{AB}:=\SP_1$, the set of \textit{entanglement-breaking (EB)} maps \cite{HSR03}.


\end{itemize}
Note that we have a nested chain of the subclasses as follows. 
\small
\begin{align}
    \notag \PO\supsetneq \PO_2\supsetneq \cdots \supsetneq \PO_{\min(d_A,d_B)}=\CP& \\
    =\;\SP_{\min(d_A,d_B)} & \supsetneq  \cdots \supsetneq\SP_2\supsetneq \EB. \label{eq-chain1}
\end{align}
\normalsize


On the other hand, linear maps acting on quantum systems are standardly identified with bipartite operators via the so-called \textit{Choi-Jamio\l{}kowski correspondence} \cite{Jam72,Cho75a}. For $\Le\in B(B(H_A),B(H_B))$, the \textit{(normalized) Choi matrix} $C_{\Le}\in B(H_A\otimes H_B)$ is defined by
    $$C_{\Le}:=(\id_A\otimes \Le)(|\Om_A\ra\la \Om_A|)=\frac{1}{d_A}\sum_{i,j=1}^{d_A}|i\ra\la j|\otimes \Le(|i\ra \la j|),$$
where $\ds|\Om_A\ra:=\frac{1}{\sqrt{d_A}}\sum_{j=1}^{d_A} |jj\ra$ is called the \textit{maximally entangled vector} in $H_A\otimes H_A$. It is known that \cite{Cho75a, Sto82, HSR03, SSZ09}

\begin{itemize}
    \item $\Le$ is Hermitian preserving if and only if $C_{\Le}$ is Hermitian,

    \item $\Le$ is $k$-positive if and only if $C_{\Le}\in \BP_{k, AB}$, the set of \textit{$k$-block positive} operators (that is, $\la \xi|C_{\Le}|\xi\ra\geq 0$ for all $\xi\in H_A\otimes H_B$ such that ${\rm SR}(|\xi\ra )\leq k$),

    \item $\Le$ is CP if and only if $C_{\Le}\in \pos_{AB}$,

    \item $\Le$ is $k$-superpositive if and only if $C_{\Le}\in \sch_{k,AB}$,

    \item $\Le$ is EB if and only if $C_\Le \in \SEP_{AB}$.


\end{itemize}



We define the adjoint linear map $\Le^*\in B(B(H_B),B(H_A))$ with respect to the Hilbert-Schumidt inner product, i.e.,
    $${\rm Tr}(\Le(Z)^*W)={\rm Tr}(Z^* \Le^*(W)),\;\; Z,W\in B(H_A).$$
Recall that the adjoint operation $\Le\mapsto \Le^*$ preserves all the above properties, i.e. $k$-positivity, complete positivity, and $k$-superpositivity.

\subsection{Mapping cones and duality}

Let us briefly recall several notions in convex analysis and the theory of mapping cones. First, $B^{h}(B(H_A),B(H_B))$ is a real vector space equipped with an inner product
\begin{equation} \label{eq-pairing1}
    \la \Phi,\Psi\ra:=\Tr(C_{\Phi}^*C_{\Psi})=\Tr(C_{\Phi}C_{\Psi}).
\end{equation}
For a subset $\K\subset B^h(B(H_A),B(H_B))$, the \textit{dual cone} $\K^{\circ}$ of $\K$ is defined by
\begin{equation}\label{eq-Horodecki}
    \K^{\circ}:=\set{\Phi\in B^{h}(B(H_A),B(H_B)):\la \Phi,\Psi\ra\geq 0\; \forall\, \Psi\in \K}. 
\end{equation}
It is well-known in convex analysis \cite{Roc70} that $\K^{\circ\circ}$ is the smallest closed convex cone containing $\K$. In particular, $\K$ is a closed convex cone if and only if $\K^{\circ\circ}=\K$. 

A closed convex cone $\K\subset \PO_{AB}$ is called a \textit{mapping cone} \cite{Sto86, Sko11} if it is invariant under the compositions by CP maps, i.e.,
\begin{equation}\label{eq10}
    \CP_{BB}\circ \K \circ \CP_{AA}\subset \K,
\end{equation}
where $\K_1\circ \K_2:=\set{\Phi\circ \Psi: \Phi\in \K_1,\; \Psi\in \K_2}$. Since the identity maps $\id_A,\id_B$ are CP maps, \eqref{eq10} is equivalent to $\CP_{BB}\circ \K \circ \CP_{AA}= \K$. 

If $\K$ is a nonzero mapping cone in $\PO_{AB}$, then the associated $K^{\circ}$ and $K^*:=\set{\Le^*:\Le\in \K}$ are mapping cones in $\PO_{AB}$ and $\PO_{BA}$, respectively \cite{Sko11, GKS21}.
Moreover, all the classes
\begin{equation} \label{eq-posmaps}
    \PO, \PO_k, \CP, \SP_k, \EB
\end{equation}
introduced in Section \ref{sec-notations} are mapping cones. Note that there are many other mapping cones important in quantum information theory, such as the cone of PPT (Positive Partial Transpose) maps, decomposable maps, and {recently} $k$-entanglement breaking maps \cite{Sto82, CMHW19, DMS23}. Now, for a mapping cone $\K$, we can consider both the Choi correspondence $C_{\K}$ and the dual cone $\K^{\circ}$. Some important examples are exhibited in Table \ref{tab-mapping} below.

\begin{table}[h!]
  \begin{center}
    \caption{Mapping cones, Choi correspondences, and dual cones}
    \label{tab-mapping}
    \begin{tabular}{c||c|c|c|c|c} 
\hline
    $\K$ & $\PO$ & $\PO_{k}$ & $\CP$ & $\SP_k$ & $\EB$ \\
\hline
    $C_\K$  & $\BP$ & $\BP_k$ & $\pos$ & $\sch_k$ & $\SEP$\\
\hline
    $\K^{\circ}$ & $\EB$ & $\SP_k$ & $\CP$ & $\PO_k$ & $\PO$\\
\hline
    \end{tabular}
  \end{center}
\end{table}




Let us explain more direct connections between the Choi correspondences $C_{\K}$ and the dual cones $\K^{\circ}$. First of all, a natural pairing between Hermitian operators $X\in B^{h}(H_{AB})$ and Hermitian-preserving linear maps $\Le\in B^{h}(B(H_A),B(H_B))$ is given by
\begin{equation} \label{eq-pairing2}
    \la X,\Le\ra:=\Tr(C_{\Le}X)=\la \Om_A|(\id_A\otimes \Le^*)(X)|\Om_A\ra.
\end{equation}




Then an extended form of {the famous {\it Horodecki criterion}} is given as follows with respect to the pairing (\ref{eq-pairing2}) above.



\begin{proposition}{\cite[Proposition 4.1]{GKS21}} \label{prop-ampliation}
Suppose that a nonzero closed convex cone $\mathcal{K}\subset B^h(B(H_A),B(H_B))$ satisfies $\K\circ \CP_{AA}\subset \K$. 
\begin{enumerate}
    \item The following are equivalent for $\Le\in B(B(H_A),B(H_B))$:
\begin{enumerate}
    \item $\Le\in \mathcal{K}$,



    \item $(\id_A\otimes \Le^*)(X)\in \pos_{AA}$ for every $X\in C_{\K^{\circ}}$.

    \item $\la X,\Le\ra\geq 0$ for every $X\in C_{\K^{\circ}}$.
\end{enumerate}
    \item The following are equivalent for $X\in B(H_A\otimes H_B)$:
\begin{enumerate}
    \item $X\in C_{\K}$,

    \item $(\id_A\otimes \Le^*)(X)\in \pos_{AA}$ for every $\Le\in \K^{\circ}$,


    \item $\la X,\Le\ra\geq 0$ for every $\Le\in \K^{\circ}$.

\end{enumerate}
\end{enumerate}

\end{proposition}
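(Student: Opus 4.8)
The plan is to prove both parts by reducing to the abstract biduality $\K^{\circ\circ}=\K$ (valid since $\K$ is a closed convex cone), so that the only genuine work is the equivalence between the pointwise positivity condition (b) and the single-matrix-element condition (c). Throughout I will use the pairing identity $\la X,\Le\ra=\la\Om_A|(\id_A\otimes\Le^*)(X)|\Om_A\ra=\Tr(C_\Le X)$ from \eqref{eq-pairing2}, together with $\Tr(C_\Phi C_\Psi)$ identifying the map-pairing $\la\Phi,\Psi\ra$ with $\la C_\Psi,\Phi\ra$. The cheap equivalences come first. For Part (1), letting $X=C_\Psi$ range over $C_{\K^\circ}$, condition (c) reads $\Tr(C_\Le C_\Psi)=\la\Le,\Psi\ra\geq 0$ for all $\Psi\in\K^\circ$, which by definition of the bidual is exactly $\Le\in\K^{\circ\circ}=\K$, giving (a)$\Leftrightarrow$(c). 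For Part (2), writing $X=C_\Psi$ for the unique Hermitian-preserving $\Psi$ with that Choi matrix, (c) reads $\la\Le,\Psi\ra\geq 0$ for all $\Le\in\K^\circ$, again equivalent by biduality to $\Psi\in\K$, i.e.\ $X\in C_\K$. In both parts (b)$\Rightarrow$(c) is immediate, since evaluating the positive operator $(\id_A\otimes\Le^*)(X)$ in the maximally entangled vector $|\Om_A\ra$ returns precisely $\la X,\Le\ra\geq 0$.

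The substantive implication is (c)$\Rightarrow$(b), and its engine is that the dual cone inherits the one-sided composition-invariance of $\K$. Concretely I would first record the adjunction
\[
\la\Phi\circ\alpha,\Psi\ra=\la\Phi,\Psi\circ\alpha^*\ra,\qquad \alpha\in\CP_{AA},
\]
which follows by expanding $\la\Phi,\Psi\ra=\tfrac{1}{d_A^2}\sum_{i,j}\Tr(\Phi(|i\ra\la j|)\Psi(|j\ra\la i|))$ and pushing a single Kraus operator of $\alpha=\ad_A$ from the argument of $\Phi$ onto that of $\Psi$, where it reappears as $\ad_{A^*}$. Granting this, for $\Phi\in\K^\circ$, $\alpha\in\CP_{AA}$ and any $\Psi\in\K$ one gets $\la\Phi\circ\alpha,\Psi\ra=\la\Phi,\Psi\circ\alpha^*\ra\geq 0$, because $\Psi\circ\alpha^*\in\K\circ\CP_{AA}\subset\K$ and $\Phi\in\K^\circ$; hence $\K^\circ\circ\CP_{AA}\subset\K^\circ$.

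With this in hand I would run a sandwiching argument. Every $\eta\in H_A\otimes H_A$ can be written as $|\eta\ra=\sqrt{d_A}\,(M\otimes\id_A)|\Om_A\ra$ for a suitable $M\in B(H_A)$, so that
\[
\la\eta|(\id_A\otimes\Le^*)(X)|\eta\ra=d_A\,\la\Om_A|(\id_A\otimes\Le^*)\big((\ad_{M^*}\otimes\id_B)(X)\big)|\Om_A\ra=d_A\,\la(\ad_{M^*}\otimes\id_B)(X),\Le\ra,
\]
and a direct computation at the maximally entangled vector gives the Choi-level twisting identity $(\ad_{M^*}\otimes\id_B)(C_\Phi)=C_{\Phi\circ\ad_{\overline{M}}}$. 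For Part (1), taking $X=C_\Phi$ with $\Phi\in\K^\circ$, this shows $(\ad_{M^*}\otimes\id_B)(X)\in C_{\K^\circ}$, since $\Phi\circ\ad_{\overline M}\in\K^\circ\circ\CP_{AA}\subset\K^\circ$; condition (c), which quantifies over all of $C_{\K^\circ}$, then applies to this twisted operator and yields $\la\eta|(\id_A\otimes\Le^*)(X)|\eta\ra\geq 0$ for every $\eta$, i.e.\ (b). For Part (2) I would instead transport the twist onto the map, using $\la(\ad_{M^*}\otimes\id_B)(X),\Le\ra=\la X,\Le\circ\ad_{M^{\top}}\ra$ with $\Le\circ\ad_{M^{\top}}\in\K^\circ$, and conclude again via (c).

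The main obstacle is exactly the composition-invariance lemma $\K^\circ\circ\CP_{AA}\subset\K^\circ$ together with the bookkeeping in the two twisting identities: one must carefully track the transposes and conjugates introduced by moving operators across $|\Om_A\ra$ and verify that the resulting map stays completely positive, so that it remains in $\K$ or $\K^\circ$ as required. Once the adjunction identity and the Choi-twisting formula are in place, everything else is formal, and the only structural hypothesis used beyond closedness and convexity of $\K$ is the one-sided invariance $\K\circ\CP_{AA}\subset\K$.
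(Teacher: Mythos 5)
The paper never proves this proposition: it is imported as a black box from \cite[Proposition 4.1]{GKS21}, so there is no in-paper argument to compare against and your proposal must be judged on its own merits. On those merits it is correct, and it is a genuine, self-contained proof of the intended statement. The skeleton --- (a)$\Leftrightarrow$(c) via biduality $\K^{\circ\circ}=\K$, (b)$\Rightarrow$(c) by evaluating the positive operator at $|\Om_A\ra$ through \eqref{eq-pairing2}, and all the real work concentrated in (c)$\Rightarrow$(b) --- is sound, and the two computational lemmas it rests on both check out. The adjunction $\la \Phi\circ\alpha,\Psi\ra=\la \Phi,\Psi\circ\alpha^{*}\ra$ for $\alpha\in\CP_{AA}$ does yield the key inheritance $\K^{\circ}\circ\CP_{AA}\subset\K^{\circ}$ from the one-sided hypothesis $\K\circ\CP_{AA}\subset\K$ (the adjoint of a CP map being CP), and the twisting identity $({\rm Ad}_{M^{*}}\otimes\id_B)(C_{\Phi})=C_{\Phi\circ{\rm Ad}_{\overline{M}}}$ combined with the vec parametrization $|\eta\ra=\sqrt{d_A}\,(M\otimes\id_A)|\Om_A\ra$ correctly converts positivity of $(\id_A\otimes\Le^{*})(X)$ against an arbitrary vector into nonnegativity of a pairing against a CP-twisted object that stays in $C_{\K^{\circ}}$ (Part (1)) or, after transporting the twist onto the map, against $\Le\circ{\rm Ad}_{M^{\top}}\in\K^{\circ}$ (Part (2)). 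I verified the conjugate/transpose bookkeeping you flagged as the main hazard: with $K=\overline{M}$ one indeed has $K^{\top}=M^{*}$, so both twisted objects land where you claim.

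One imprecision you inherit from the quoted statement rather than introduce yourself: your biduality step takes place inside the real vector space $B^h(B(H_A),B(H_B))$, i.e., it tacitly assumes $\Le$ is Hermitian-preserving in Part (1) (and $X$ Hermitian in Part (2)), whereas the proposition as printed quantifies over all of $B(B(H_A),B(H_B))$ (resp.\ $B(H_A\otimes H_B)$). Covering the literal statement would require arguing that (c) forces Hermiticity, which holds precisely when $C_{\K^{\circ}}$ spans $B^h(H_A\otimes H_B)$, i.e., when $\K$ is pointed; without some such hypothesis the literal statement can even fail (take $\K=B^h(B(H_A),B(H_B))$ itself, so that $\K^{\circ}=\set{0}$ and conditions (b), (c) become vacuous for every $\Le$, Hermitian-preserving or not). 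Since this edge case is an artifact of how the proposition is quoted, and your argument proves the version actually used throughout the paper, the proof stands.
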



Note that Proposition \ref{prop-ampliation} can be applied for arbitrary mapping cone $\K$. For example, the {Horodecki criterion} is for a special case $C_{\K}=\SEP$ and $\K^{\circ}=\PO$, and it was used to study separability of quantum states \cite{HHH96}. Furthermore, Proposition \ref{prop-ampliation} has been applied to study quantum states with upper bounds on the Schmidt numbers \cite{TH00}, decomposable maps \cite{Sto82}, $k$-positive maps \cite{EK00, TH00}, and $k$-superpositive maps \cite{SSZ09}. 

Very recently, \cite{PJPY23} proposed an optimized form of Proposition \ref{prop-ampliation} under compact group symmetries for a special case where $C_{\K}=\PO$ and $\K^{\circ}=\SEP$. Furthermore, they introduced a concrete application to analyze separability of quantum states. One of the main contributions of this paper is to extend their results for general mapping cones $\K$ under compact group symmetries. See Section \ref{sec-mapping_cones} and Corollary \ref{cor-extreme} for more details.

\vspace{10mm}


\subsection{Compact group symmetries and twirling operations}

Let $\pi:G\to \U(H)$, $\pi_A:G\to \U(H_A)$ and $\pi_B:G\to \U(H_B)$ be (finite-dimensional) {\it unitary representations} of a compact Hausdorff group $G$. An operator $X\in B(H)$ is called {\it $\pi$-invariant} if 
\begin{equation}
\pi(x)X= X \pi(x)    
\end{equation}
for all $x\in G$, and a linear map $\Le:B(H_A)\rightarrow B(H_B)$ is called {\it $(\pi_A,\pi_B)$-covariant} if 
\begin{equation}
    \Le(\pi_A(x)Z\pi_A(x)^*)=\pi_B(x)\Le(Z)\pi_B(x)^*
\end{equation} 
for all $x\in G$ and $Z\in B(H_A)$. We denote by ${\rm Inv}(\pi)$ the set of all $\pi$-invariant operators in $B(H)$, and by ${\rm Cov}(\pi_A,\pi_B)$ the set of all $(\pi_A,\pi_B)$-covariant maps in $B(B(H_A),B(H_B))$. A unitary representation $\pi$ is called \textit{irreducible} if ${\rm Inv}(\pi)=\Comp \cdot I_d$. If $\pi$ is irreducible, so is the \textit{contragredient representation} $\overline{\pi}:G\to \U(H)$. Here, $\overline{\pi}(x)=\overline{\pi(x)}$ for $x\in G$. For unitary representations $\pi_A:G\rightarrow \U(H_A)$ and $\pi_B:G\rightarrow \U(H_B)$, the {\it tensor representation} $\pi_A\otimes \pi_B:G\rightarrow \U(H_A\otimes H_B)$ is given by $(\pi_A\otimes \pi_B)(x)=\pi_A(x)\otimes \pi_B(x)$ for $x\in G$.

We are interested in two types of standard averaging techniques. The \textit{$\pi$-twirling} operation on $B(H)$ is defined by
\begin{equation}
    \mathcal{T}_{\pi}X:=\int_G[{\rm Ad}_{\pi(x)}(X) ]dx=\int_{G}[\pi(x)X\pi(x)^*]dx
\end{equation}
for any operator $X\in B(H)$ and a unitary representation $\pi$ of $G$. Moreover, the {\it $(\pi_A,\pi_B)$-twirling} operation on $B(B(H_A),B(H_B))$ is defined by
\begin{align}
    \mathcal{T}_{\pi_A,\pi_B}\Phi:=\int_G[{\rm Ad}_{\pi_B(x^{-1})}\circ \Phi\circ {\rm Ad}_{\pi_A(x)}]dx
\end{align}
for any linear maps $\Phi\in B(B(H_A),B(H_B))$ and unitary representations $\pi_A,\pi_B$ of $G$. Here, $dx$ is the (normalized) Haar measure of $G$.

Let us collect some useful properties of the twirling operations for later uses. First of all, $\mathcal{T}_{\pi}$ is a \textit{conditional expectation} onto the $*$-subalgebra ${\rm Inv}(\pi)$ of $B(H)$ in the sense that $\Tt_{\pi}\circ \Tt_{\pi}=\Tt_{\pi}$ and the range of $\Tt_{\pi}$ is $\Inv(\pi)$. Similarly, $\Tt_{\pi_A,\pi_B}$ can be seen as a projection onto the space $\Cov(\pi_A,\pi_B)$. These two operations satisfy the following properties.

\begin{proposition} \cite{PJPY23} \label{prop-twirling}
Let $\pi:G\to \U(H)$, $\pi_A:G\to \U(H_A)$ and $\pi_B:G\to \U(H_B)$ be unitary representations of $G$. Then we have the following.
\begin{enumerate}
    \item ${\rm Tr}((\mathcal{T}_{\pi}Z)^*\,W)={\rm Tr}(Z^* (\mathcal{T}_{\pi}W))$ for any $Z,W\in B(H)$.
    
    
    \item $\left (\mathcal{T}_{\pi_A,\pi_B}\Phi\right )^*=\mathcal{T}_{\pi_B,\pi_A}(\Phi^*)$ for any $\Phi\in B(B(H_A), B(H_B))$.
    
    \item $C_{(\mathcal{T}_{\pi_A,\pi_B}\Phi)}=\mathcal{T}_{\overline{\pi_A}\otimes {\pi_B}}\left ( C_{\Phi}\right )$ for any $\Phi \in B(B(H_A), B(H_B))$.
\end{enumerate}
\end{proposition}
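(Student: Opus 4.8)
The plan is to verify each of the three identities by unwinding the integral definitions of the twirling operations and exploiting two structural facts valid for any unitary representation of a compact group: each $\pi(x)$ is unitary, so $\pi(x)^*=\pi(x^{-1})$, and the Haar measure is invariant under the inversion $x\mapsto x^{-1}$ (compact groups are unimodular). These permit free substitution $x\mapsto x^{-1}$ inside the integrals, which is the maneuver that repeatedly converts one side of each identity into the other.

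For part (1), I would expand $\mathcal{T}_\pi Z=\int_G \pi(x)Z\pi(x)^*\,dx$ and compute $\Tr((\mathcal{T}_\pi Z)^*W)$ directly. Using $(\pi(x)Z\pi(x)^*)^*=\pi(x)Z^*\pi(x)^*$ together with cyclicity of the trace, the integrand becomes $\Tr(Z^*\pi(x)^*W\pi(x))$. Writing $\pi(x)^*=\pi(x^{-1})$ and substituting $x\mapsto x^{-1}$ by Haar invariance turns this into $\Tr(Z^*\pi(x)W\pi(x)^*)$, whose integral is exactly $\Tr(Z^*\mathcal{T}_\pi W)$. This is the routine base case.

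For part (2), I would use that adjunction reverses composition, $(\Psi_1\circ\Psi_2)^*=\Psi_2^*\circ\Psi_1^*$, together with the elementary identity $({\rm Ad}_K)^*={\rm Ad}_{K^*}$ (immediate from cyclicity of the trace). Pulling the adjoint inside the defining integral $\mathcal{T}_{\pi_A,\pi_B}\Phi=\int_G {\rm Ad}_{\pi_B(x^{-1})}\circ\Phi\circ{\rm Ad}_{\pi_A(x)}\,dx$, each factor transforms as ${\rm Ad}_{\pi_A(x)}^*={\rm Ad}_{\pi_A(x^{-1})}$ and ${\rm Ad}_{\pi_B(x^{-1})}^*={\rm Ad}_{\pi_B(x)}$, and the order flips. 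The result is $\int_G {\rm Ad}_{\pi_A(x^{-1})}\circ\Phi^*\circ{\rm Ad}_{\pi_B(x)}\,dx$, which is precisely $\mathcal{T}_{\pi_B,\pi_A}(\Phi^*)$ upon matching it against the definition with the roles of $A$ and $B$ interchanged; here no change of variables is even needed.

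Part (3) is the step I expect to carry the real content. Since the Choi map $\Phi\mapsto C_\Phi$ is linear it commutes with the Haar integral, so it suffices to compute the Choi matrix of the single conjugated map $\Psi_x={\rm Ad}_{\pi_B(x^{-1})}\circ\Phi\circ{\rm Ad}_{\pi_A(x)}$ for fixed $x$ and then integrate. The crucial tool is the ``ricochet'' (transpose) identity for the maximally entangled vector, $(A\otimes I_A)|\Om_A\ra=(I_A\otimes A^\top)|\Om_A\ra$, where $A^\top$ is the transpose in the fixed orthonormal basis; applying it to the input conjugation by $\pi_A(x)$ transfers that conjugation onto the first leg of $C_\Phi$ as a transpose. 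Carrying out the three conjugations in turn yields $C_{\Psi_x}=(\pi_A(x)^\top\otimes\pi_B(x)^*)\,C_\Phi\,(\pi_A(x)^\top\otimes\pi_B(x)^*)^*$. The main subtlety is recognizing the contragredient: since $\pi_A(x)^\top=\overline{\pi_A(x)^*}=\overline{\pi_A}(x^{-1})$ and $\pi_B(x)^*=\pi_B(x^{-1})$, the conjugator is exactly $(\overline{\pi_A}\otimes\pi_B)(x^{-1})$, so $C_{\Psi_x}={\rm Ad}_{(\overline{\pi_A}\otimes\pi_B)(x^{-1})}(C_\Phi)$. Integrating over $G$ and substituting $x\mapsto x^{-1}$ one last time gives $\mathcal{T}_{\overline{\pi_A}\otimes\pi_B}(C_\Phi)$, as claimed. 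The one place to be careful is the bookkeeping of transposes, conjugates, and inverses, so that the conjugate representation $\overline{\pi_A}$ lands on the correct (first) tensor leg with the correct argument; everything else is formal.
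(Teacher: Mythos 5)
Your proof is correct: all three identities are verified by exactly the computations one would expect, and the key technical ingredients --- unitarity $\pi(x)^*=\pi(x^{-1})$, inversion-invariance of the Haar measure on a compact (hence unimodular) group, the reversal rule $(\Psi_1\circ\Psi_2)^*=\Psi_2^*\circ\Psi_1^*$ with $(\mathrm{Ad}_K)^*=\mathrm{Ad}_{K^*}$, and the ricochet identity $(A\otimes I)|\Om_A\ra=(I\otimes A^{\top})|\Om_A\ra$ which is what produces the contragredient $\overline{\pi_A}$ on the first leg --- are all deployed correctly, including the bookkeeping in part (3) giving $C_{\Psi_x}=\mathrm{Ad}_{(\overline{\pi_A}\otimes\pi_B)(x^{-1})}(C_\Phi)$. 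Note, however, that this paper does not prove the proposition at all: it is quoted from \cite{PJPY23}, so there is no in-paper argument to compare against; your self-contained verification is the standard one and fills in what the paper delegates to that reference.
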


Since invariant operators and covariant linear maps are the images of the twirling projections, Proposition \ref{prop-twirling} (2), (3), and (4) imply the following conclusions.

\begin{corollary} \cite{PJPY23}\label{cor-twirling}
Let $X\in B(H_A\otimes H_B)$ be a bipartite operator and $\Phi:B(H_A)\to B(H_B)$ be a linear map. Then
\begin{enumerate}
    
    \item $\Phi\in {\rm Cov}(\pi_A,\pi_B)$ if and only if $\Phi^*\in {\rm Cov}(\pi_B,\pi_A)$.
    
    \item $\Phi\in {\rm Cov}(\pi_A,\pi_B)$ if and only if $C_{\Phi}\in {\rm Inv}(\overline{\pi_A}\otimes \pi_B)$.
\end{enumerate}
\end{corollary}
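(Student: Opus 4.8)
The plan is to exploit the fact, recorded just before Proposition~\ref{prop-twirling}, that the twirling operations are idempotent projections with prescribed ranges: $\Tt_{\pi}$ is a projection onto $\Inv(\pi)$ and $\Tt_{\pi_A,\pi_B}$ is a projection onto $\Cov(\pi_A,\pi_B)$. From idempotency together with the range description one obtains the fixed-point characterizations
\begin{equation}
    X\in \Inv(\pi)\iff \Tt_{\pi} X=X, \qquad \Phi\in \Cov(\pi_A,\pi_B)\iff \Tt_{\pi_A,\pi_B}\Phi=\Phi.
\end{equation}
Indeed, if $X$ lies in the range of $\Tt_{\pi}$, say $X=\Tt_{\pi} Y$, then $\Tt_{\pi} X=\Tt_{\pi}\circ\Tt_{\pi} Y=\Tt_{\pi} Y=X$; conversely $\Tt_{\pi} X=X$ forces $X$ into the range, which is exactly $\Inv(\pi)$. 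The identical reasoning applies to $\Tt_{\pi_A,\pi_B}$ and $\Cov(\pi_A,\pi_B)$. With these equivalences in hand, both assertions follow by transporting the algebraic identities of Proposition~\ref{prop-twirling}(2),(3) across the fixed-point conditions.

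For part (1), suppose $\Phi\in\Cov(\pi_A,\pi_B)$, i.e.\ $\Tt_{\pi_A,\pi_B}\Phi=\Phi$. Taking adjoints and invoking Proposition~\ref{prop-twirling}(2) gives
\begin{equation}
    \Phi^*=(\Tt_{\pi_A,\pi_B}\Phi)^*=\Tt_{\pi_B,\pi_A}(\Phi^*),
\end{equation}
so $\Phi^*$ is a fixed point of $\Tt_{\pi_B,\pi_A}$, hence $\Phi^*\in\Cov(\pi_B,\pi_A)$. The converse is obtained by interchanging the roles of $(A,\pi_A)$ and $(B,\pi_B)$ and using $(\Phi^*)^*=\Phi$.

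For part (2), recall that the Choi correspondence $\Phi\mapsto C_{\Phi}$ is a linear bijection, so $\Tt_{\pi_A,\pi_B}\Phi=\Phi$ holds if and only if both sides share the same Choi matrix. By Proposition~\ref{prop-twirling}(3) we have $C_{(\Tt_{\pi_A,\pi_B}\Phi)}=\Tt_{\overline{\pi_A}\otimes\pi_B}(C_{\Phi})$, whence
\begin{equation}
    \Phi\in\Cov(\pi_A,\pi_B)\iff \Tt_{\overline{\pi_A}\otimes\pi_B}(C_{\Phi})=C_{\Phi}\iff C_{\Phi}\in\Inv(\overline{\pi_A}\otimes\pi_B),
\end{equation}
the last equivalence being the fixed-point characterization of $\Inv(\overline{\pi_A}\otimes\pi_B)$.

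I do not anticipate a genuine obstacle here: once the fixed-point reformulations are granted, the two statements are immediate consequences of the compatibility of twirling with the adjoint and with the Choi map. The only point requiring care is the justification of the fixed-point characterizations themselves, which is precisely where idempotency and the exact identification of the range of each twirling projection enter; everything else is a formal manipulation.
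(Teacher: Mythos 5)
Your proof is correct and takes essentially the same route as the paper: the paper justifies this corollary in a single sentence by noting that $\Cov(\pi_A,\pi_B)$ and $\Inv(\pi)$ are the ranges of the idempotent twirling projections and then invoking Proposition \ref{prop-twirling}, which is exactly the fixed-point argument you spell out in detail. Your write-up merely makes explicit the step the paper leaves implicit, so there is nothing to correct.
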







\vspace{5mm}

\section{Mapping cones with compact group symmetry}\label{sec-mapping_cones}

From now on, we describe how standard duality results between mapping cones can be naturally carried over into our framework of compact group symmetry. Such a strategy of optimizing the duality relations under compact group symmetries was pursued in \cite{PJPY23} for a special mapping cone $\K=\EB$, where we have $C_{\K}=\SEP$ and $\K^{\circ}=\PO$. The authors applied the optimized Horodecki criterion to study separability of invariant quantum states under the standard symmetries of the signed symmetric group (or the hyperoctahedral group). In this section, we prove that their approach covers not only for the special $\K=\EB$, but also for general mapping cones $\K$ under compact group symmetries. Then, in Section \ref{sec-application},  we apply the generalized results to characterize $k$-positivity of all orthogonally invariant quantum states.

\begin{lemma} \label{lem-twirlparing}
Let $\pi_A,\pi_B$ be two unitary representations of $G$. 
\begin{enumerate}
    \item For $\Phi,\Psi \in B(B(H_A),B(H_B))$, we have
\begin{equation} \label{eq-twirlpair1}
    \la \Tt_{\pi_A,\pi_B}\Phi,\Psi\ra=\la \Phi,\Tt_{\pi_A,\pi_B}\Psi\ra.
\end{equation}
\item For $X\in B(H_A\otimes H_B)$ and $\Le\in B(B(H_A), B(H_B))$, we have
\begin{equation} \label{eq-twirlpair2}
    \la \mathcal{T}_{\overline{\pi_A}\otimes \pi_B}X, \Le \ra=\la X, \mathcal{T}_{\pi_A,\pi_B}\Le\ra.
\end{equation}
\end{enumerate}
\end{lemma}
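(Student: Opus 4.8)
The plan is to prove both identities as direct consequences of the self-adjointness of the twirling operations with respect to the relevant inner products, exactly as advertised in the remark preceding the lemma (the twirlings are conditional expectations / orthogonal projections). I would treat (1) and (2) separately but with the same underlying principle.

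For part (1), I would work directly with the inner product on $B^h(B(H_A),B(H_B))$ defined in \eqref{eq-pairing1}. Unwinding $\Tt_{\pi_A,\pi_B}$ as an integral over $G$, the pairing $\la \Tt_{\pi_A,\pi_B}\Phi,\Psi\ra$ becomes $\int_G \la \myop{Ad}_{\pi_B(x^{-1})}\circ \Phi \circ \myop{Ad}_{\pi_A(x)},\Psi\ra\,dx$. The key step is to move each $\myop{Ad}_{\pi(x)}$ across the pairing onto the $\Psi$ side. The cleanest way is to invoke Proposition \ref{prop-twirling}(1): since each slice of the twirling is a composition by the unitary conjugations $\myop{Ad}_{\pi_A(x)}$ and $\myop{Ad}_{\pi_B(x^{-1})}$, and these are adjoint to $\myop{Ad}_{\pi_A(x^{-1})}$ and $\myop{Ad}_{\pi_B(x)}$ respectively under the Hilbert--Schmidt inner product, the integrand equals $\la \Phi,\myop{Ad}_{\pi_B(x)}\circ \Psi \circ \myop{Ad}_{\pi_A(x^{-1})}\ra$. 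Integrating over $G$ and using the invariance of the Haar measure under $x\mapsto x^{-1}$ then reassembles the right-hand side $\la \Phi,\Tt_{\pi_A,\pi_B}\Psi\ra$. Alternatively, and perhaps more elegantly, I would deduce (1) from the combination of Proposition \ref{prop-twirling}(2) and (3): writing $\la \Phi,\Psi\ra = \Tr(C_\Phi C_\Psi)$, applying part (3) to turn $C_{\Tt_{\pi_A,\pi_B}\Phi}$ into $\Tt_{\overline{\pi_A}\otimes\pi_B}(C_\Phi)$, and then using that the operator twirling $\Tt_{\overline{\pi_A}\otimes\pi_B}$ is itself self-adjoint for the Hilbert--Schmidt trace pairing (which is Proposition \ref{prop-twirling}(1) applied to the single representation $\overline{\pi_A}\otimes\pi_B$). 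This reduces (1) to the already-established self-adjointness of a scalar twirling.

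For part (2), I would start from the definition of the mixed pairing in \eqref{eq-pairing2}, namely $\la X,\Le\ra = \Tr(C_\Le X)$. The natural route is to rewrite both sides through the Choi correspondence so that everything lives in $B(H_A\otimes H_B)$ and the pairing becomes the Hilbert--Schmidt trace. On the right-hand side, $\la X,\Tt_{\pi_A,\pi_B}\Le\ra = \Tr(C_{\Tt_{\pi_A,\pi_B}\Le}\,X)$, and Proposition \ref{prop-twirling}(3) identifies $C_{\Tt_{\pi_A,\pi_B}\Le}$ with $\Tt_{\overline{\pi_A}\otimes\pi_B}(C_\Le)$. On the left-hand side, $\la \Tt_{\overline{\pi_A}\otimes\pi_B}X,\Le\ra = \Tr(C_\Le\,\Tt_{\overline{\pi_A}\otimes\pi_B}X)$. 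Thus both sides reduce to a trace pairing of $C_\Le$ and $X$ with a single copy of the operator twirling $\Tt_{\overline{\pi_A}\otimes\pi_B}$ inserted, and the identity collapses precisely to the self-adjointness of $\Tt_{\overline{\pi_A}\otimes\pi_B}$ under the trace form $\Tr(\,\cdot\,\cdot\,)$, i.e. $\Tr(A\,\Tt_\pi B)=\Tr((\Tt_\pi A)\,B)$, which is Proposition \ref{prop-twirling}(1) for the representation $\pi=\overline{\pi_A}\otimes\pi_B$.

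I do not expect a genuine obstacle here, since both statements are formally equivalent to the self-adjointness of a twirling projection; the only care needed is bookkeeping. The main point requiring attention is matching conventions: making sure the Choi map $\Le\mapsto C_\Le$ intertwines the map twirling $\Tt_{\pi_A,\pi_B}$ with the correct operator twirling $\Tt_{\overline{\pi_A}\otimes\pi_B}$ (the contragredient $\overline{\pi_A}$ appearing on the $A$-leg is exactly what Proposition \ref{prop-twirling}(3) supplies), and verifying that the pairing $\la X,\Le\ra=\Tr(C_\Le X)$ is the honest Hilbert--Schmidt trace with no stray adjoint or complex conjugate, so that the self-adjointness of $\Tt_{\overline{\pi_A}\otimes\pi_B}$ applies cleanly. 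Once these identifications are in place, both \eqref{eq-twirlpair1} and \eqref{eq-twirlpair2} follow in a single line each.
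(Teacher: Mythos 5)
Your proposal is correct, and your preferred route (reducing everything through the Choi correspondence via Proposition \ref{prop-twirling}(3) and then invoking the self-adjointness of the operator twirling $\Tt_{\overline{\pi_A}\otimes\pi_B}$ from Proposition \ref{prop-twirling}(1)) is exactly the paper's proof of both identities. The direct integral argument you sketch first, using $\myop{Ad}_{\pi(x)}^{*}=\myop{Ad}_{\pi(x^{-1})}$ and inverse-invariance of the Haar measure, is also valid but is not needed once the Choi-based reduction is in place.
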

\begin{proof}
Both two identities follow from Proposition \ref{prop-twirling}. Indeed, we have
\begin{align*}
    \la \Tt_{\pi_A,\pi_B}\Phi, \Psi\ra &= \Tr((C_{(\Tt_{\pi_A,\pi_B}\Phi)})^*C_{\Psi})= \Tr((\Tt_{\overline{\pi_A}\otimes \pi_B}C_\Phi)^*C_{\Psi})\\
    &=\Tr(C_\Phi^* \Tt_{\overline{\pi_A}\otimes \pi_B}C_{\Psi}) = \Tr(C_{\Phi}^*C_{(\Tt_{\pi_A,\pi_B}\Psi)})=\la \Phi, \Tt_{\pi_A,\pi_B}\Psi \ra
\end{align*}
which implies \eqref{eq-twirlpair1}. Also, we can repeat a similar argument to prove \eqref{eq-twirlpair2}.
\end{proof}

Recall that positivity, complete positivity, and EB property are preserved under the twirling operation $\Le\mapsto \Tt_{\pi_A,\pi_B}\Le$ \cite[Proposition 2.1]{PJPY23}.  More generally, we have $\Tt_{\pi_A,\pi_B}(\K) \subseteq \K$ for any mapping cone $\K$ by the following Proposition \ref{prop-twirlmapping}.

\newpage

\begin{proposition} \label{prop-twirlmapping}
For a closed convex cone $\K\in B^h(B(H_A),B(H_B))$, the following are equivalent:
\begin{enumerate}
    \item $\Tt_{\pi_A,\pi_B}\K\subset \K$,

    \item $\Tt_{\pi_A,\pi_B}(\K^{\circ})\subset \K^{\circ}$,

    \item $\Tt_{\pi_B,\pi_A}(\K^{*})\subset \K^{*}$,
    
    \item $\Tt_{\overline{\pi_A}\otimes \pi_B}C_{\K}\subset C_{\K}$.
\end{enumerate}
In this case, we have $\Tt_{\overline{\pi_A}\otimes \pi_B}C_{\K}={\Inv}(\overline{\pi_A}\otimes \pi_B)\cap C_{\K}$ and $\Tt_{\pi_A,\pi_B}\K={\rm Cov}(\pi_A,\pi_B)\cap \K$. Moreover, the above conditions hold if $\CP_{BB}\circ \K\circ \CP_{AA}\subset \K$.
\end{proposition}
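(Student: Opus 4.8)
The plan is to regard the two twirling maps as self-adjoint projections and to move freely between the three pictures of maps, adjoint maps, and Choi matrices via the intertwining identities of Proposition \ref{prop-twirling}. The equivalences (1)$\iff$(3) and (1)$\iff$(4) I expect to be essentially formal. Since $\Phi\mapsto C_{\Phi}$ is a linear bijection satisfying $C_{(\Tt_{\pi_A,\pi_B}\Phi)}=\Tt_{\overline{\pi_A}\otimes\pi_B}(C_{\Phi})$ (Proposition \ref{prop-twirling}(3)) and $C_{\K}=\set{C_{\Phi}:\Phi\in\K}$, invariance of $\K$ under $\Tt_{\pi_A,\pi_B}$ transports verbatim to invariance of $C_{\K}$ under $\Tt_{\overline{\pi_A}\otimes\pi_B}$, giving (1)$\iff$(4). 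Likewise, using $(\Tt_{\pi_A,\pi_B}\Phi)^*=\Tt_{\pi_B,\pi_A}(\Phi^*)$ (Proposition \ref{prop-twirling}(2)) and the bijectivity of the adjoint $\Phi\mapsto\Phi^*$, condition (1) transports to (3).

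The conceptual core is (1)$\iff$(2). Here I would exploit that $\Tt_{\pi_A,\pi_B}$ is self-adjoint for the pairing $\la\cdot,\cdot\ra$, which is exactly Lemma \ref{lem-twirlparing}(1). The statement to isolate is: for any self-adjoint linear operator $T$ and any set $\K$, if $T\K\subset\K$ then $T\K^{\circ}\subset\K^{\circ}$. Indeed, for $\Phi\in\K^{\circ}$ and $\Psi\in\K$ one has $\la T\Phi,\Psi\ra=\la\Phi,T\Psi\ra\geq 0$ since $T\Psi\in\K$, so $T\Phi\in\K^{\circ}$; this yields (1)$\Rightarrow$(2) at once. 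For the converse I would apply the same principle with $\K^{\circ}$ in place of $\K$ (a dual cone is automatically a closed convex cone) to obtain $T\K^{\circ\circ}\subset\K^{\circ\circ}$, and then invoke biduality $\K^{\circ\circ}=\K$. This last appeal is the only place where the standing hypothesis that $\K$ is a closed convex cone is genuinely used, and it is the step I would flag as the subtle point of the argument.

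For the two range identities I would use that $\Tt_{\overline{\pi_A}\otimes\pi_B}$ and $\Tt_{\pi_A,\pi_B}$ are projections onto $\Inv(\overline{\pi_A}\otimes\pi_B)$ and $\Cov(\pi_A,\pi_B)$. The inclusion $\Tt_{\overline{\pi_A}\otimes\pi_B}C_{\K}\subseteq\Inv(\overline{\pi_A}\otimes\pi_B)\cap C_{\K}$ is immediate, since the range lands in $\Inv(\overline{\pi_A}\otimes\pi_B)$ and, by condition (4), also in $C_{\K}$. For the reverse inclusion, any $X\in\Inv(\overline{\pi_A}\otimes\pi_B)\cap C_{\K}$ is fixed by the projection, so $X=\Tt_{\overline{\pi_A}\otimes\pi_B}X\in\Tt_{\overline{\pi_A}\otimes\pi_B}C_{\K}$. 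The identity for $\Tt_{\pi_A,\pi_B}\K$ is proved identically.

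Finally, for the sufficiency of $\CP_{BB}\circ\K\circ\CP_{AA}\subset\K$ I would verify (1) pointwise under the Haar integral: for fixed $x\in G$ the unitary conjugations ${\rm Ad}_{\pi_A(x)}$ and ${\rm Ad}_{\pi_B(x^{-1})}$ are completely positive, so for $\Phi\in\K$ the integrand ${\rm Ad}_{\pi_B(x^{-1})}\circ\Phi\circ{\rm Ad}_{\pi_A(x)}$ lies in $\CP_{BB}\circ\K\circ\CP_{AA}\subset\K$. Since $\K$ is a closed convex cone it absorbs the average, so $\Tt_{\pi_A,\pi_B}\Phi$, the Haar integral of these integrands, remains in $\K$. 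The only care needed is the standard fact that a closed convex cone contains such integrals as limits of convex combinations.
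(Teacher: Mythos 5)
Your proof is correct and follows essentially the same route as the paper's: the formal transport of (1) to (3) and (4) via Proposition \ref{prop-twirling}, the self-adjointness of the twirling under the pairing (Lemma \ref{lem-twirlparing}) together with biduality $\K^{\circ\circ}=\K$ for (1)$\Leftrightarrow$(2), idempotence of the twirling projections for the two range identities, and membership of each integrand ${\rm Ad}_{\pi_B(x^{-1})}\circ\Phi\circ{\rm Ad}_{\pi_A(x)}$ in $\K$ plus closedness of the convex cone under Haar averaging for the final assertion. There is nothing to correct.
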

\begin{proof}
    The equivalence $(1)\Leftrightarrow (3) \Leftrightarrow (4)$ is a direct consequence from Proposition \ref{prop-twirling}. For $(1)\Rightarrow (2)$, it is enough to see that 
        $$\la \Tt_{\pi_A,\pi_B}\Phi,\Psi\ra=\la \Phi,\Tt_{\pi_A,\pi_B}\Psi\ra\geq 0$$
for any $\Phi\in \K^{\circ}$ and $\Psi\in \K$. The other direction $(2)\Rightarrow (1)$ follows from $(1)\Rightarrow (2)$ since $\K^{\circ\circ}=\K$.
    
The second conclusion follows from the properties $\Tt_{\overline{\pi_A}\otimes \pi_B}\circ \Tt_{\overline{\pi_A}\otimes \pi_B}=\Tt_{\overline{\pi_A}\otimes \pi_B}$ and $\Tt_{\pi_A,\pi_B}\circ \Tt_{\pi_A,\pi_B}=\Tt_{\pi_A,\pi_B}$. For the last assertion, it is enough to note that the twirling operations preserve positivity of linear maps, and that $\Tt_{\pi_A,\pi_B}\Phi$ is approximated by convex combinations of ${\rm Ad}_{\pi_B(x)^*}\circ \Phi\circ {\rm Ad}_{\pi_A(x)}\in \K$ for $x\in G$.
\end{proof}

Recall that $\K$ and $C_{\K^{\circ}}$ determine each other via the generalized Horodecki criterion (Proposition \ref{prop-ampliation}). One of our main results in this section is to establish an analogous result for ${\rm Cov}(\pi_A,\pi_B)\cap \K$ and ${\Inv}(\overline{\pi_A}\otimes \pi_B)\cap C_{\K^{\circ}}$ as follows.

\begin{theorem} \label{thm-main1}
Let $\K\subset B^h(B(H_A),B(H_B))$ be a closed convex cone satisfying $\CP_{BB}\circ \K\circ \CP_{AA}\subset \K$. 
\begin{enumerate}
    \item The following are equivalent for $\Le\in {\rm}Cov(\pi_A,\pi_B)$:
\begin{enumerate}
    \item $\Le\in {\rm Cov}(\pi_A,\pi_B)\cap \K$,


    \item $(\id_A\otimes \Le^*)(X)\in \pos_{AA}$ for every $X\in {\Inv}(\overline{\pi_A}\otimes \pi_B)\cap C_{\K^{\circ}}$,

    \item $\la X, \Le\ra\geq 0$ for every $X\in {\Inv}(\overline{\pi_A}\otimes \pi_B)\cap C_{\K^{\circ}}$.
\end{enumerate}
\item The following are equivalent for $X\in {\rm Inv}(\overline{\pi_A}\otimes \pi_B)$:
\begin{enumerate}
    \item $X\in {\Inv}(\overline{\pi_A}\otimes \pi_B)\cap C_{\K}$,

    \item $(\id_A\otimes \Le^*)(X)\in \pos_{AA}$ for every $\Le \in {\rm Cov}(\pi_A,\pi_B)\cap \K^{\circ}$,

    \item $\la X,\Le\ra\geq 0$ for every $\Le \in {\rm Cov}(\pi_A,\pi_B)\cap \K^{\circ}$.
\end{enumerate}
\end{enumerate}
\end{theorem}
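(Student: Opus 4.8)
The plan is to deduce the symmetrized criterion from the general Horodecki criterion (Proposition \ref{prop-ampliation}) by exploiting that the fixed object $\Le$ (resp. $X$) is already invariant under the relevant twirling projection. In each of the two parts I would prove the cycle (a)$\Rightarrow$(b)$\Rightarrow$(c)$\Rightarrow$(a). The two easy arrows are immediate. For (a)$\Rightarrow$(b) one simply quotes the corresponding implication of Proposition \ref{prop-ampliation}, with the test set shrunk from $C_{\K^{\circ}}$ to its invariant part $\Inv(\overline{\pi_A}\otimes\pi_B)\cap C_{\K^{\circ}}$ in part (1), and from $\K^{\circ}$ to $\Cov(\pi_A,\pi_B)\cap\K^{\circ}$ in part (2); restricting the quantifier can only weaken the conclusion. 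For (b)$\Rightarrow$(c) I would pair against the maximally entangled state: by \eqref{eq-pairing2}, $\la X,\Le\ra=\la\Om_A|(\id_A\otimes\Le^*)(X)|\Om_A\ra\geq 0$ whenever $(\id_A\otimes\Le^*)(X)\in\pos_{AA}$.

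The entire content lies in (c)$\Rightarrow$(a), and this is where the symmetry enters. Consider part (1). By the implication (c)$\Rightarrow$(a) of Proposition \ref{prop-ampliation}(1) it suffices to upgrade the hypothesis $\la X,\Le\ra\geq 0$ from the invariant subset to \emph{all} $X\in C_{\K^{\circ}}$. Given an arbitrary such $X$, I would use that $\Le\in\Cov(\pi_A,\pi_B)$ means $\Tt_{\pi_A,\pi_B}\Le=\Le$, so that Lemma \ref{lem-twirlparing}(2) yields
$$\la X,\Le\ra=\la X,\Tt_{\pi_A,\pi_B}\Le\ra=\la\Tt_{\overline{\pi_A}\otimes\pi_B}X,\Le\ra.$$
Now $\Tt_{\overline{\pi_A}\otimes\pi_B}X$ lies in $\Inv(\overline{\pi_A}\otimes\pi_B)$ by construction and remains in $C_{\K^{\circ}}$ because the twirling preserves that cone, so the hypothesis (c) applies to it and gives $\la X,\Le\ra\geq 0$; together with the standing assumption $\Le\in\Cov(\pi_A,\pi_B)$ this produces $\Le\in\Cov(\pi_A,\pi_B)\cap\K$. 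Part (2) is dual: there I would instead use $\Tt_{\overline{\pi_A}\otimes\pi_B}X=X$ (as $X$ is invariant) to rewrite $\la X,\Le\ra=\la X,\Tt_{\pi_A,\pi_B}\Le\ra$ for an arbitrary $\Le\in\K^{\circ}$, observe that $\Tt_{\pi_A,\pi_B}\Le\in\Cov(\pi_A,\pi_B)\cap\K^{\circ}$, apply (c), and then invoke Proposition \ref{prop-ampliation}(2).

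The step demanding the most care — and the main obstacle — is justifying that the twirled test element stays inside the dual object, namely $\Tt_{\overline{\pi_A}\otimes\pi_B}C_{\K^{\circ}}\subseteq C_{\K^{\circ}}$ in part (1) and $\Tt_{\pi_A,\pi_B}\K^{\circ}\subseteq\K^{\circ}$ in part (2). For this I would invoke Proposition \ref{prop-twirlmapping}: the standing hypothesis $\CP_{BB}\circ\K\circ\CP_{AA}\subseteq\K$ gives $\Tt_{\pi_A,\pi_B}\K\subseteq\K$ via its ``moreover'' clause, hence the equivalence (1)$\Leftrightarrow$(2) gives $\Tt_{\pi_A,\pi_B}\K^{\circ}\subseteq\K^{\circ}$, and applying the equivalence (1)$\Leftrightarrow$(4) to the cone $\K^{\circ}$ itself then yields $\Tt_{\overline{\pi_A}\otimes\pi_B}C_{\K^{\circ}}\subseteq C_{\K^{\circ}}$. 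I would be mindful that $\K$ is only assumed CP-invariant rather than a genuine mapping cone inside $\PO_{AB}$; but since $\K^{\circ}$ is used solely through these twirling-invariance properties and never through any positivity of $\K^{\circ}$ itself, Proposition \ref{prop-twirlmapping} applies verbatim and no further hypothesis is needed.
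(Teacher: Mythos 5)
Your proposal is correct and follows essentially the same route as the paper's own proof: the easy implications via Proposition \ref{prop-ampliation} and the pairing \eqref{eq-pairing2}, and the key implication $(c)\Rightarrow(a)$ by twirling the arbitrary test element with Lemma \ref{lem-twirlparing} and confirming via Proposition \ref{prop-twirlmapping} that the twirled element stays in the dual cone. If anything, you are slightly more careful than the paper, writing out part (2) (which the paper dismisses as ``analogous'') and making explicit the chain of equivalences in Proposition \ref{prop-twirlmapping}, applied to $\K^{\circ}$ itself, that justifies $\Tt_{\overline{\pi_A}\otimes\pi_B}C_{\K^{\circ}}\subseteq C_{\K^{\circ}}$.
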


\begin{proof}
Let us prove only the first part. The other one is analogous. Note that $(a)\Rightarrow (b)$ follows from Proposition \ref{prop-ampliation} and $(b)\Rightarrow (c)$ is clear from the relation \eqref{eq-pairing2}, so it suffices to prove the direction $(c)\Rightarrow (a)$. Since $\Le\in \Cov(\pi_A, \pi_B)$, we have 
    $$\la X, \Le\ra = \la X, \Tt_{\pi_A,\pi_B}\Le\ra = \la \Tt_{\overline{\pi_A}\otimes \pi_B}X, \Le\ra$$
for all $X\in C_{\K^{\circ}}$ by Lemma \ref{lem-twirlparing}. Now $\Tt_{\overline{\pi_A}\otimes \pi_B}X\in {\rm Inv}(\overline{\pi_A}\otimes \pi_B)\cap C_{\K^{\circ}}$ by Proposition \ref{prop-twirlmapping}, so the assumption $(c)$ implies that $\la X,\Le\ra\geq 0$ for all $X\in C_{\K^{\circ}}$. Therefore, Proposition \ref{prop-ampliation} implies $\Le3 \in \K$ again.
\end{proof}



Note that ${\rm Cov}(\pi_A,\pi_B)\cap \K$ plays as witnesses for ${\Inv}(\overline{\pi_A}\otimes \pi_B)\cap C_{\K^{\circ}}$ when $\CP_{BB}\circ \K\circ \CP_{AA}\subset \K$. We can further prove that much fewer witnesses from ${\rm Cov}(\pi_A,\pi_B)\cap \K$ are enough if $\K$ is a nonzero mapping cone, i.e., $\K\subset \PO_{AB}$. Let us start with a compact convex subset
    $$\Covn(\pi_A,\pi_B)\cap \K:=\set{\Phi\in {\Cov}(\pi_A,\pi_B)\cap \K:\Tr C_{\Phi}=1}.$$

\begin{remark}\label{lem-covIrred} 
\cite[Lemma 3.6]{PJPY23} implies that $(d_B/d_A)\cdot \Covn(\pi_A,\pi_B)$ is the set of all unital $(\pi_A,\pi_B)$-covariant maps if $\pi_A$ is irreducible, and that $\Covn(\pi_A,\pi_B)$ is the set of all trace-preserving $(\pi_A,\pi_B)$-covariant maps if $\pi_B$ is irreducible. Note that the notation $\Covn(\pi_A,\pi_B)$ is used differently in \cite{PJPY23}.
\end{remark}

Now we can extend Theorem 3.9 (6) of \cite{PJPY23} to a general context of mapping cones. More precisely, Theorem \ref{thm-main1} implies that only the extreme points of $\Covn(\pi_A,\pi_B)\cap \K^{\circ}$ are enough as witnesses for $\Inv(\overline{\pi_A}\otimes \pi_B)\cap C_\K$ since every compact convex set in a finite-dimensional space can be written as the convex hull of of its extreme points.

\begin{corollary} \label{cor-extreme}
If $\mathcal{K}\subset \PO_{AB}$ is a nonzero mapping cone, then the following are equivalent for $X\in \Inv(\overline{\pi_A}\otimes \pi_B)$:
\begin{enumerate}
    \item $X\in \Inv(\overline{\pi_A}\otimes \pi_B)\cap C_\K$,

    \item $(\id_A\otimes \Le^*)(X)\in \pos_{AA}$ for every $\Le\in {\rm Ext}(\Covn(\pi_A,\pi_B)\cap \K^{\circ})$,

    \item $\la X,\Le\ra\geq 0$ for every $\Le\in {\rm Ext}(\Covn(\pi_A,\pi_B)\cap \K^{\circ})$.
\end{enumerate}
\end{corollary}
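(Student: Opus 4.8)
The plan is to derive the whole statement from Theorem \ref{thm-main1}(2), which already establishes the equivalence of the three analogous conditions but with the \emph{full} cone $\Cov(\pi_A,\pi_B)\cap\K^{\circ}$ playing the role of witnesses, rather than only the extreme points of its normalized slice $\Covn(\pi_A,\pi_B)\cap\K^{\circ}$. The structural input I would lean on is that, since $\K$ is a nonzero mapping cone, $\K^{\circ}$ is again a mapping cone; in particular $\K^{\circ}\subset\PO_{AB}$ consists of \emph{positive} maps, and it is exactly this positivity that will let me pass from the cone to a compact base.

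The implications $(1)\Rightarrow(2)\Rightarrow(3)$ are the routine part. Because $\mathrm{Ext}(\Covn(\pi_A,\pi_B)\cap\K^{\circ})\subset\Cov(\pi_A,\pi_B)\cap\K^{\circ}$, the step $(1)\Rightarrow(2)$ is merely the restriction of Theorem \ref{thm-main1}(2), $(a)\Rightarrow(b)$, to this subset, and $(2)\Rightarrow(3)$ is immediate from the identity \eqref{eq-pairing2}: if $(\id_A\otimes\Le^{*})(X)\in\pos_{AA}$ then $\la X,\Le\ra=\la\Om_A|(\id_A\otimes\Le^{*})(X)|\Om_A\ra\geq 0$.

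The substance is the converse $(3)\Rightarrow(1)$. By Theorem \ref{thm-main1}(2), $(c)\Rightarrow(a)$, it suffices to upgrade the hypothesis, stated only over extreme points, to the assertion that $\la X,\Le\ra\geq 0$ for \emph{every} $\Le\in\Cov(\pi_A,\pi_B)\cap\K^{\circ}$. For this I would first verify that $\Covn(\pi_A,\pi_B)\cap\K^{\circ}$ is a compact convex base of the cone $\Cov(\pi_A,\pi_B)\cap\K^{\circ}$: convexity and closedness are clear, being the intersection of the closed cone with the hyperplane $\{\Tr C_\Phi=1\}$, and the normalization is legitimate because any nonzero positive $\Le$ has $\Tr C_\Le=\tfrac1{d_A}\Tr(\Le(I_A))>0$ (as $\Le(I_A)\geq 0$, and $\Tr(\Le(I_A))=0$ would force $\Le(I_A)=0$, hence $\Le=0$ by positivity). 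Granting compactness, the finite-dimensional Minkowski theorem in $B^{h}(B(H_A),B(H_B))$ gives $\Covn(\pi_A,\pi_B)\cap\K^{\circ}=\mathrm{conv}\,\mathrm{Ext}(\Covn(\pi_A,\pi_B)\cap\K^{\circ})$. Any nonzero $\Le\in\Cov(\pi_A,\pi_B)\cap\K^{\circ}$ then factors as $\Le=(\Tr C_\Le)\,\Phi$ with $\Phi=\sum_m t_m\Psi_m$ a convex combination of extreme points, so by linearity of the pairing and hypothesis (3),
\[
\la X,\Le\ra=(\Tr C_\Le)\sum_m t_m\la X,\Psi_m\ra\geq 0,
\]
the case $\Le=0$ being trivial. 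This is condition $(c)$ of Theorem \ref{thm-main1}(2), whence $X\in\Inv(\overline{\pi_A}\otimes\pi_B)\cap C_\K$.

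The main obstacle I anticipate is precisely the boundedness (and hence compactness) of the slice, which is where the positivity of $\K^{\circ}$ is indispensable. I would obtain it from the value on the identity: for a positive map $\|\Le\|=\|\Le(I_A)\|$, while on the slice $\Tr(\Le(I_A))=d_A\Tr C_\Le=d_A$ together with $\Le(I_A)\geq 0$ forces $\|\Le(I_A)\|\leq d_A$; thus the slice lies in the norm ball $\{\|\Le\|\leq d_A\}$ and is compact. Every other ingredient is bookkeeping layered on top of Theorem \ref{thm-main1}.
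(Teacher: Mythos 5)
Your proposal is correct and follows essentially the same route as the paper: the paper derives the corollary from Theorem \ref{thm-main1}(2) together with the fact that the compact convex set $\Covn(\pi_A,\pi_B)\cap\K^{\circ}$ is the convex hull of its extreme points, exactly as you do. The only difference is that you explicitly verify what the paper takes for granted, namely that positivity of $\K^{\circ}$ (as a mapping cone) makes the normalized slice a compact base of the cone, so that the witness condition extends from extreme points to the whole of $\Cov(\pi_A,\pi_B)\cap\K^{\circ}$ by convex combination and scaling.
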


Recall that the main purpose of Theorem 3.9 (6) in \cite{PJPY23} was to use ${\rm Ext}(\Covn(\pi_A,\pi_B)\cap \PO)$ to study separability of $\rho\in \Inv(\overline{\pi_A}\otimes \pi_B)\cap \D$. A merit of our general approach is that Corollary \ref{cor-extreme} with the case $\K=\SP_k$ provides a new systematic way to compute the Schmidt numbers of $\rho\in \Inv(\overline{\pi_A}\otimes \pi_B)\cap \D$ using ${\rm Ext}(\Covn(\pi_A,\pi_B)\cap \PO_k)$ as a family of witnesses.

\begin{theorem} \label{cor-invSch}
Let $\rho\in {\rm Inv}(\overline{\pi_A}\otimes \pi_B)$. Then the following are equivalent:
\begin{enumerate}
    \item ${\rm SN}(\rho)\leq k$,

    \item $(\id_A\otimes \Le)(\rho)\in \pos_{AA}$ for every $\Le\in {\rm Ext}(\Covn(\pi_B,\pi_A)\cap \PO_k)$,

    \item $\la \Om_A|(\id_A\otimes \Le)(\rho)|\Om_A\ra \geq 0$ for every $\Le\in {\rm Ext}(\Covn(\pi_B,\pi_A)\cap \PO_k)$.
\end{enumerate}
\end{theorem}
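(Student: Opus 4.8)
The plan is to obtain Theorem \ref{cor-invSch} as the specialization of Corollary \ref{cor-extreme} to the mapping cone $\K=\SP_{k,AB}$, followed by a relabeling through the adjoint operation $\Le\mapsto \Le^*$. First I would record why $\SP_k$ is the correct cone: it is a nonzero mapping cone contained in $\PO_{AB}$, its Choi image is $C_{\SP_k}=\sch_k$, and its dual cone is $(\SP_k)^{\circ}=\PO_k$ (Table \ref{tab-mapping}). Applying Corollary \ref{cor-extreme} to $X=\rho$ with $\K=\SP_k$, condition (1) there reads $\rho\in \Inv(\overline{\pi_A}\otimes \pi_B)\cap \sch_k$; since $\rho$ is a quantum state, this is precisely ${\rm SN}(\rho)\leq k$. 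The witnessing family appearing in conditions (2) and (3) of the corollary is ${\rm Ext}(\Covn(\pi_A,\pi_B)\cap \PO_k)$, and the two conditions are phrased through the adjoints $\Le^*$, namely $(\id_A\otimes \Le^*)(\rho)\in \pos_{AA}$ and $\la \rho,\Le\ra\geq 0$.

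The remaining work is to rewrite these in terms of the family ${\rm Ext}(\Covn(\pi_B,\pi_A)\cap \PO_k)$ that actually appears in the statement. Setting $\mathcal{M}:=\Le^*$, Corollary \ref{cor-twirling}(1) sends $\Cov(\pi_A,\pi_B)$ bijectively onto $\Cov(\pi_B,\pi_A)$, and $k$-positivity is preserved under adjoints, so the adjoint restricts to a linear bijection between the relevant covariant $k$-positive maps. The one point requiring care is the normalization constraint hidden in $\Covn$. A short computation starting from $\Tr C_{\Le}=d_A^{-1}\Tr \Le(I_{d_A})$ together with $\Tr \Le(I_{d_A})=\Tr \Le^*(I_{d_B})$ (which follows from the defining relation of the adjoint applied to $I_{d_A},I_{d_B}$ and the fact that $\Le(I_{d_A})$ is Hermitian) yields $\Tr C_{\Le^*}=(d_A/d_B)\Tr C_{\Le}$.

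Consequently the adjoint carries $\Covn(\pi_A,\pi_B)\cap \PO_k$ onto the positive dilation $(d_A/d_B)\bigl(\Covn(\pi_B,\pi_A)\cap \PO_k\bigr)$, and since both a linear bijection and a positive dilation map extreme points to extreme points, one gets
\begin{equation}
\set{\Le^*:\Le\in {\rm Ext}(\Covn(\pi_A,\pi_B)\cap \PO_k)}=\frac{d_A}{d_B}\,{\rm Ext}(\Covn(\pi_B,\pi_A)\cap \PO_k).
\end{equation}
Substituting $\mathcal{M}=\Le^*$ and using that both $(\id_A\otimes \mathcal{M})(\rho)\in \pos_{AA}$ and $\la \Om_A|(\id_A\otimes \mathcal{M})(\rho)|\Om_A\ra\geq 0$ are invariant under scaling $\mathcal{M}$ by the positive constant $d_A/d_B$, conditions (2) and (3) of Corollary \ref{cor-extreme} turn into conditions (2) and (3) of the theorem, now indexed by $\mathcal{M}\in {\rm Ext}(\Covn(\pi_B,\pi_A)\cap \PO_k)$. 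For condition (3) I would also invoke \eqref{eq-pairing2} to rewrite $\la \rho,\Le\ra=\la \Om_A|(\id_A\otimes \Le^*)(\rho)|\Om_A\ra$, so that the pairing matches the desired expectation-value form.

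I expect the main obstacle to be purely the bookkeeping around the normalization factor $d_A/d_B$ and the verification that extreme points transfer correctly under the adjoint; the rest is a direct reading of Corollary \ref{cor-extreme} with $\K=\SP_k$. The conceptual content is entirely contained in the general duality of Corollary \ref{cor-extreme}, so no new analytic input beyond the adjoint identities and scale-invariance of positivity should be needed.
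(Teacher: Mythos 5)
Your proposal is correct and follows exactly the route the paper intends: the paper derives Theorem \ref{cor-invSch} by applying Corollary \ref{cor-extreme} with $\K=\SP_k$ (so $C_{\K}=\sch_k$, $\K^{\circ}=\PO_k$) and then relabeling witnesses through the adjoint $\Le\mapsto\Le^*$, which is precisely what you do. Your explicit bookkeeping of the normalization, $\Tr C_{\Le^*}=(d_A/d_B)\Tr C_{\Le}$, and the resulting identification $\set{\Le^*:\Le\in{\rm Ext}(\Covn(\pi_A,\pi_B)\cap\PO_k)}=\tfrac{d_A}{d_B}{\rm Ext}(\Covn(\pi_B,\pi_A)\cap\PO_k)$ correctly fills in the detail the paper leaves implicit, and the scale-invariance of the positivity conditions closes the argument.
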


The above Theorem \ref{cor-invSch} will be applied for concrete applications in Section \ref{sec-application}.


    



\section{Quantum objects with orthogonal group symmetry}\label{sec-application}

ddThere are very few examples whose $k$-positivity or Schmidt numbers have been fully characterized. Even in the following cases
\begin{align}
    &\Le^{(d)}_{a,b}(Z):= (1-a-b)\frac{\Tr(Z)}{d}I_d+aZ+bZ^{\top}, \label{eq-OOCov}\\ 
    &\rho_{a,b}^{(d)}:=\frac{1-a-b}{d^2}I_d\otimes I_d+a|\Om_d\ra\la \Om_d|+\frac{b}{d}{F_d}, \label{eq-OOInv}
\end{align}
their $k$-positivity and Schmidt numbers have not been fully characterized. Here, $\ds |\Om_d\ra:=\frac{1}{\sqrt{d}}\sum_{j=1}^d|jj\ra$ is the maximally entangled state and $\ds F_d:=\sum_{i,j=1}^d |ij\ra\la ji|$ is the flip matrix. Let us write $\Le_{a,b}:=\Le^{(d)}_{a,b}$ and $\rho_{a,b}:=\rho^{(d)}_{a,b}$ for simplicity. On the problem of $k$-positivity, the answers for some special cases $\Le_{a,0}$, $\Le_{0,b}$, and $\Le_{a,1-a}$, were obtained from \cite{Tom85}, and some other cases were considered in \cite{TT83}. On the other hand, Schmidt numbers of the \textit{isotropic states} $\rho_{a,0}$ \cite{HH99} were computed in \cite{TH00}, and Schmidt numbers of the \textit{Werner states} $\rho_{0,b}$ \cite{Wer89} were also known (see \cite[Theorem 1.7.4]{Kye23lect} for example). Despite the partial answers to the cases of single parameters, the problems of the general cases $\Le_{a,b}$ and $\rho_{a,b}$ remain open. To our best knowledge, our result is the first example of computations of the Schmidt numbers for non-trivial two-dimensional families of quantum states in arbitrarily high dimensional settings.

A crucial observation is that the above quantum objects $\Le_{a,b}$ and $\rho_{a,b}$ are linked via the standard orthogonal group symmetries. Let $G$ be the orthogonal group $\mathcal{O}(d)$, and let $\pi_A(O)=\pi_B(O)=O$ be the standard representation of $\mathcal{O}(d)$. In this case, we denote by $\Cov(O,O)=\Cov(\pi_A,\pi_B)$ and $\Inv(O\otimes O)=\Inv(\overline{\pi_A}\otimes \pi_B)$ for simplicity. Then we have $\rho_{a,b}=C_{\Le_{a,b}}$ and $\Covn(O,O)=\set{\Le_{a,b}:a,b\in \Comp}$, as noted in \cite{VW01, Has18}. Moreover, $\Le_{a,b}$ is Hermitian-preserving if and only if $a,b\in \Real$.

In this section, we aim to establish a complete characterization of $k$-positivity of $\Le_{a,b}$ and Schmidt number of $\rho_{a,b}$ in terms of the parameters $a$ and $b$. Then $k$-block positivity of $\rho_{a,b}$ and $k$-superpositivity of $\Le_{a,b}$ are immediate through the Choi-Jamio{\l}kowski map.

Our strategy consists of two steps. The first step is to employ some methodologies from \cite{Tom85} to study $k$-positivity of $\Le_{a,b}\in \Covn(O,O)$ in a direct way (Theorem \ref{thm-OOkpos}), and the second step is to apply Theorem \ref{cor-invSch} to compute the Schmidt numbers of all $\rho_{a,b}\in {\rm Inv}(O\otimes O)\cap \D$ accurately (Theorem \ref{thm-OOSch}).

\subsection{$k$-positivity of orthogonally covariant maps}

Note that positivity and complete positivity of $\Le_{p,q}$ were completely characterized recently in a more general setting, namely the \textit{hyperoctahedrally covariant maps} \cite[Section 4]{PJPY23}. This section is devoted to characterizing $k$-positivity of all $\Le_{p,q}\in \Covn(O,O)$, which generalizes the results from \cite{Tom85}. Indeed, for the following convex subsets 
\begin{equation} \label{eq-kposset}
    P_k:=\set{(p,q)\in \Real^2:\Le_{p,q}\in \PO_k},~1\leq k\leq d,
\end{equation}
we prove $P_1\supsetneq P_2\supsetneq \cdots \supsetneq P_d$ with explicit geometric and algebraic descriptions. First of all, the geometric structures of the convex subsets $P_k$ can be categorized into four distinct cases.

\vspace{4pt}

\begin{enumerate}
    \item The region $P_1$ is \textit{trapezoid-shaped} with vertices $(1,0)$, $(0,-\frac{1}{d-1})$, $(-\frac{1}{d-1},0)$, and $(0,1)$.

\vspace{3pt}
    
    \item If $1<k\leq \frac{d}{2}$, the region $P_k$ is \textit{quadrilateral-shaped} with vertices $(1,0)$, $(0,-\frac{1}{d-1})$, $(-\frac{1}{kd-1},0)$, and $(-\frac{1}{kd+k-1}, \frac{k}{kd+k-1})$.

    \vspace{3pt}
    
    \item If $\frac{d}{2}<k<d$, the region $P_k$ is bounded by a piecewise-linear curve joining $(-\frac{2}{d^2+d-2},\frac{d}{d^2+d-2}), (1,0), (0,-\frac{1}{d-1})$, and $(-\frac{1}{kd-1},0)$ in that order, and by a conic (i.e. a quadratic curve) passing through $(-\frac{1}{kd-1},0)$ and $(-\frac{2}{d^2+d-2},\frac{d}{d^2+d-2})$.


\vspace{3pt}

    \item Lastly, the region $P_d$ is a \textit{triangle} with vertices $(1,0), (0,-\frac{1}{d-1})$, and $(-\frac{2}{d^2+d-2}, \frac{d}{d^2+d-2})$.
\end{enumerate}

\vspace{3pt}

A visualization of the above characterizations for special cases $d=3$ and $d=4$ are given in the following Figure \ref{fig-OOkpos}.

\begin{figure}[htb!] 
    \centering
    \includegraphics[scale=0.52]{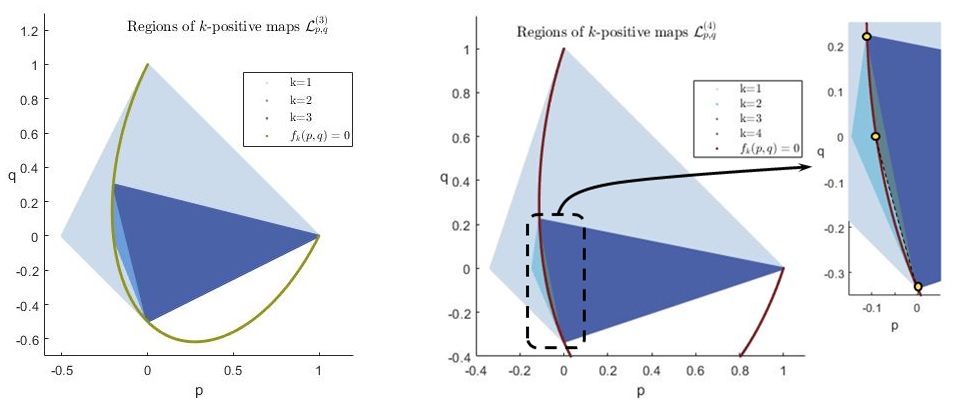}
    \caption{The regions of $k$-positive maps $\mathcal{L}_{p,q}^{(d)}$ for $d=3,4$}
    \label{fig-OOkpos}
\end{figure}

\newpage

We now present explicit algebraic descriptions of the regions $P_k$ (equivalently, $\PO_k$) in the following theorem.

\begin{theorem} \label{thm-OOkpos}
Let $\Le_{p,q}$ be a linear map of the form \eqref{eq-OOCov}. Then
\begin{enumerate}[leftmargin=*]
    \item $\Le_{p,q}\in \PO$ if and only if $\begin{cases} p-(d-1)q\leq 1, \\ q-(d-1)q\leq 1,\\-\frac{1}{d-1}\leq p+q\leq 1.\end{cases}$ 
    
    \item $\Le_{p,q}\in \PO_k$ ($1<k\leq \frac{d}{2}$) if and only if $\begin{cases} p-(d-1)q\leq 1, \\ p+(d+1)q\leq 1, \\ (kd-1)p+(d-1)q\geq -1,\\ q-(kd-1)p\leq 1.\end{cases}$
    
    \item $\Le_{p,q}\in \PO_k$ ($\frac{d}{2}<k<d$) if and only if $\begin{cases} p-(d-1)q\leq 1, \\ p+(d+1)q\leq 1, \\ (kd-1)p+(d-1)q\geq -1,\\ f_k(x,y)\leq 0,\end{cases}$ where $f_k(x,y)$ is a quadratic polynomial explicitly given by
    \begin{align}
        f_k(x,y)=&(kd-1)x^2-(d^3-kd^2-kd-d+2)xy+(d-1)y^2 \nonumber \\
        &-(kd-2)x-(d-2)y-1. \label{eq-conic1}
    \end{align}
    
    \item $\Le_{p,q}\in \CP$ if and only if $\begin{cases} p-(d-1)q\leq 1, \\ p+(d+1)q\leq 1, \\ (d+1)p+q\geq -\frac{1}{d-1}.\end{cases}$ 
\end{enumerate}
\end{theorem}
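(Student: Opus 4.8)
The plan is to translate the $k$-positivity of $\Le_{p,q}$ into a two-dimensional convex-geometry problem and then dualize. By the Choi correspondence recalled in Section~\ref{sec-notations}, $\Le_{p,q}\in\PO_k$ if and only if its Choi matrix $\rho_{p,q}=C_{\Le_{p,q}}$ lies in $\BP_{k}$, i.e. $\la\xi|\rho_{p,q}|\xi\ra\geq 0$ for every unit vector $\xi$ with ${\rm SR}(|\xi\ra)\leq k$. First I would diagonalize $\rho_{p,q}$ using the $\mathcal O(d)$-symmetry: the commutant of $O\otimes O$ on $\Comp^d\otimes\Comp^d$ is spanned by the three orthogonal projections $\Pi_0=|\Om_d\ra\la\Om_d|$, $\Pi_{\mathrm a}=\frac12(I-F_d)$, and $\Pi_{\mathrm s}=\frac12(I+F_d)-\Pi_0$, so $\rho_{p,q}=\mu_0\Pi_0+\mu_{\mathrm s}\Pi_{\mathrm s}+\mu_{\mathrm a}\Pi_{\mathrm a}$ with $\mu_0-\mu_{\mathrm s}=p$ and $\mu_{\mathrm a}-\mu_{\mathrm s}=-\frac{2q}{d}$. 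Setting $x_0=\la\xi|\Pi_0|\xi\ra$ and $x_{\mathrm a}=\la\xi|\Pi_{\mathrm a}|\xi\ra$, block-positivity becomes the single scalar inequality $1+(d^2x_0-1)p+(d-1-2dx_{\mathrm a})q\geq0$ for all admissible $(x_0,x_{\mathrm a})$. Hence $\PO_k$ is the region dual (in the sense of polars) to the achievable set $\mathcal R_k=\{(x_0,x_{\mathrm a}):\|\xi\|=1,\ {\rm SR}(\xi)\leq k\}$, and the theorem reduces entirely to identifying $\mathcal R_k$ and its extreme structure.

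To make $\mathcal R_k$ computable I would pass to the coefficient matrix: writing $|\xi\ra=\sum_{m,n}\Xi_{mn}|mn\ra$ with $\Xi\in M_d(\Comp)$, one has ${\rm SR}(\xi)={\rm rank}(\Xi)$, $\|\xi\|_2=\|\Xi\|_2$, $x_0=\frac1d|\Tr\Xi|^2$, and $x_{\mathrm a}=\|\Xi_{\mathrm a}\|_2^2$ where $\Xi_{\mathrm a}=\frac12(\Xi-\Xi^{\top})$. The bound $|\Tr\Xi|^2\leq{\rm rank}(\Xi)\,\|\Xi\|_2^2$ gives the ceiling $x_0\leq k/d$. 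I would then record the three extreme points common to all $k$: $(0,0)$ (attained by a product vector), $(k/d,0)$ (where $\Xi$ is proportional to a rank-$k$ projection, the truncation of $|\Om_d\ra$), and, for $k\geq2$, $(0,1)$ (a normalized antisymmetric vector). Dualizing these yields exactly $p-(d-1)q\leq1$, $(kd-1)p+(d-1)q\geq-1$, and $p+(d+1)q\leq1$, the three inequalities shared by cases (2)--(4). The two extreme regimes are then immediate: for $k=d$ there is no rank restriction, $\mathcal R_d$ is the triangle $(0,0),(1,0),(0,1)$, and the dual inequalities coincide with the eigenvalue conditions $\mu_0,\mu_{\mathrm s},\mu_{\mathrm a}\geq0$ for $\rho_{p,q}\geq 0$ (case (4)); for $k=1$ one restricts to $\Xi=vw^{\top}$, computes $x_0=\frac1d|v^{\top}w|^2$ and $x_{\mathrm a}=\frac12(1-|\la v|w\ra|^2)\leq\frac12$, and analyzes the joint range of $(|v^\top w|,|\la v|w\ra|)$ to obtain the trapezoid (case (1)).

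The core of the argument, and the place where the dichotomy at $k=d/2$ is born, is the upper-right boundary of $\mathcal R_k$ for $1<k<d$. Here I would use the model family in which $\Xi$ is assembled from $m$ complex rank-one blocks $c\left(\begin{smallmatrix}1&i\\-i&1\end{smallmatrix}\right)$ --- each having trace $2c$, antisymmetric weight $2|c|^2$ and rank $1$, while occupying two coordinates --- together with $r$ real diagonal entries. Maximizing $\Tr\Xi$ at fixed $x_{\mathrm a}=\tau$ subject to the rank bound ${\rm rank}=m+r\leq k$ and the coordinate budget $2m+r\leq d$ governs the whole boundary. When $2k\leq d$ the coordinate budget is slack, the point $(k/d,\tfrac12)$ is attained (place $k$ such blocks on $2k\leq d$ coordinates), and this fourth vertex dualizes to the remaining inequality $q-(kd-1)p\leq1$, making $\mathcal R_k$ the quadrilateral of case (2). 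When $2k>d$ the point $(k/d,\tfrac12)$ becomes unreachable, the coordinate budget binds, and the optimal value $x_0=\frac1d\big(\sqrt{2(d-k)\tau}+\sqrt{(2k-d)(1-2\tau)}\big)^2$ traces a genuine conic arc of $\partial\mathcal R_k$ (flanked, tangentially, by a short segment at $x_0=k/d$ and by a segment of the line $x_0+x_{\mathrm a}=1$); the envelope of the associated dual lines is precisely the quadratic $f_k$ of \eqref{eq-conic1}, giving case (3).

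The main obstacle is proving that this block family is actually extremal, i.e. that no Schmidt-rank-$\leq k$ vector violates the conic constraint. The subtlety is genuinely complex-linear-algebraic: a complex antisymmetric matrix can have real spectrum, so $cI+\Xi_{\mathrm a}$ may drop rank, and it is exactly this phenomenon that allows $(k/d,\tfrac12)$ to be reached when $2k\le d$ and forbids it when $2k>d$. To settle optimality rigorously I would reduce to the canonical block form of $\Xi_{\mathrm a}$ under real orthogonal conjugation (which preserves $x_0$, $x_{\mathrm a}$ and rank) and then solve the resulting constrained trace-maximization by a Lagrange/convexity argument, checking that the supporting line in every direction is attained on the stated family. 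Once $\partial\mathcal R_k$ is pinned down in each regime, the four cases follow by dualization; I would close by verifying that the claimed vertices and the conic pass through the stated points --- e.g. $f_k(-\tfrac1{kd-1},0)=0$ --- and that the conic is tangent to the two flanking edges, which is what makes the piecewise boundary of $\PO_k$ fit together.
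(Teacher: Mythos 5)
Your route is genuinely different from the paper's, and the parts you actually carry out are correct. You dualize $k$-block-positivity of the Choi matrix over the rank-constrained joint numerical range $\mathcal R_k=\set{(x_0,x_{\mathrm a})}$, and your bookkeeping checks out: the scalar form $1+(d^2x_0-1)p+(d-1-2dx_{\mathrm a})q\geq 0$ is the correct translation, the extreme points $(0,0)$, $(k/d,0)$, $(0,1)$, $(k/d,\tfrac12)$ dualize exactly to the four stated inequalities, and your conic $x_0=\frac1d\bigl(\sqrt{2(d-k)\tau}+\sqrt{(2k-d)(1-2\tau)}\bigr)^2$ is tangent to $x_0=k/d$ at $\tau=\frac{d-k}{2k}$ and to $x_0+x_{\mathrm a}=1$ at $\tau=\frac{2(d-k)}{3d-2k}$, which is consistent with the dual of the region in Theorem \ref{thm-OOkpos}(3). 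The paper instead works on the map side: it uses Tomiyama's criterion (Proposition \ref{prop-kpos}), splits $C_k^v(\Le_{p,q})$ along the symmetric/antisymmetric decomposition, and observes that all resulting positivity conditions are \emph{affine in the single scalar} $\|\xi_1\|^2=\frac1k\sum_{j,j'}|\la v_j|\overline{v_{j'}}\ra|^2$, so only the extreme values of that scalar matter; these are computed in Lemma \ref{lem-OOoptimization} by a projection-rank bound plus an explicit Fourier frame. That one-dimensional reduction is what your approach lacks.

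This is where your proposal has a genuine gap. Exhibiting vectors that attain the claimed extreme points only proves $\PO_k\subseteq\set{\text{stated inequalities}}$; the substance of the theorem is the reverse containment, i.e.\ that \emph{no} Schmidt-rank-$\leq k$ vector produces a point of $\mathcal R_k$ outside the claimed region. You name this as "the main obstacle," but your proposed resolution---pass to "the canonical block form of $\Xi_{\mathrm a}$ under real orthogonal conjugation" and finish by "a Lagrange/convexity argument"---is not an argument that closes it. Writing $\Xi_{\mathrm a}=A+iB$ with $A,B$ real antisymmetric, real orthogonal conjugation can normalize $A$ or $B$ but not both simultaneously, so there is no usable canonical form for a \emph{complex} antisymmetric matrix under this group; moreover the rank constraint couples $\Xi_{\mathrm s}$ and $\Xi_{\mathrm a}$ non-convexly, which a Lagrange sketch does not control. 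For case (2) the hole is patchable by an elementary inequality you never state: $dx_0=|\Tr\Xi|^2=|\Tr\Xi_{\mathrm s}|^2\leq \mathrm{rank}(\Xi_{\mathrm s})\,\|\Xi_{\mathrm s}\|_2^2\leq 2k(1-x_{\mathrm a})$, which is exactly the missing edge $dx_0+2kx_{\mathrm a}\leq 2k$. But for case (3)---the regime $\frac d2<k<d$ where the conic appears, which is the paper's main new content---you give no proof that your block family is extremal, and no comparably simple inequality is available. Until that optimization is actually solved (the role played in the paper by Lemma \ref{lem-OOoptimization} together with the linearity-in-$\|\xi_1\|^2$ trick), parts (2) and (3) of the theorem remain unestablished in your approach.
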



As mentioned already, (1) and (4) of Theorem \ref{thm-OOkpos}, i.e. positivity and complete positivity of $\Le_{p,q}$, were fully characterized in a recent paper \cite[Section 4]{PJPY23} for more general models. The authors studied linear maps of the form
\begin{equation}
    \psi_{a,b,c}(Z):=a\frac{\Tr(Z)}{d}I_d+bZ+cZ^{\top}+(1-a-b-c){\rm diag}(Z),
\end{equation}
and exhibited a full characterization of positivity and complete positivity. Thus, our main focus is to prove (2) and (3) of Theorem \ref{thm-OOkpos}. Let us recall a criterion of $k$-positivity proposed in \cite[Lemma 1]{Tom85}.

\begin{proposition} \label{prop-kpos}
Let $1\leq k\leq d$. Then a linear map $\Le:M_d\to M_d$ is $k$-positive if and only if the bipartite matrix
   \begin{equation}\label{eq40}
       C_k^v(\Le):=\sum_{i,j=1}^k |i\ra\la j|\otimes \Le(|v_i\ra\la v_j|)\in M_k\otimes M_d
   \end{equation}
is positive semidefinite for any choice of an orthonormal subset $\set{v_1,\ldots, v_k}$ of $\Comp^d$.
\end{proposition}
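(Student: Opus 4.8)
The plan is to identify $C_k^v(\Le)$ as the image under the ampliation $\id_k\otimes \Le$ of one distinguished rank-one projection, and then to reduce the positivity test against \emph{all} input vectors to these distinguished ones by a single conjugation. Setting $|\om_v\ra:=\sum_{i=1}^k|i\ra\otimes|v_i\ra\in \Comp^k\otimes \Comp^d$, expanding $|\om_v\ra\la\om_v|=\sum_{i,j=1}^k|i\ra\la j|\otimes |v_i\ra\la v_j|$ and applying $\id_k\otimes \Le$ shows at once that $C_k^v(\Le)=(\id_k\otimes \Le)(|\om_v\ra\la\om_v|)$. The forward direction is then immediate: if $\Le$ is $k$-positive, then $\id_k\otimes \Le$ sends the positive operator $|\om_v\ra\la\om_v|$ to a positive one, so $C_k^v(\Le)\geq 0$ for every orthonormal subset $\{v_1,\dots,v_k\}$.

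For the converse I would use that $\Le$ is $k$-positive precisely when $(\id_k\otimes \Le)(|\xi\ra\la\xi|)\geq 0$ for every $|\xi\ra\in \Comp^k\otimes \Comp^d$, since every positive operator on $\Comp^k\otimes \Comp^d$ is a nonnegative combination of such rank-one projections and $\id_k\otimes\Le$ is linear. Fix such a $|\xi\ra$ and take its Schmidt decomposition $|\xi\ra=\sum_{i=1}^r \lambda_i |u_i\ra\otimes |v_i\ra$ with $r\leq k$, the $\{u_i\}$ orthonormal in $\Comp^k$, the $\{v_i\}$ orthonormal in $\Comp^d$, and $\lambda_i>0$. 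Because $r\leq k\leq d$, I can pad $\{v_1,\dots,v_r\}$ to a genuine orthonormal $k$-subset $\{v_1,\dots,v_k\}$ of $\Comp^d$ and put $\lambda_i:=0$ for $r<i\leq k$.

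The crux is the operator $W:=\sum_{i=1}^k \lambda_i|u_i\ra\la i|\in B(\Comp^k)$, which satisfies $W|i\ra=\lambda_i|u_i\ra$. Conjugating the distinguished matrix by $W\otimes I_d$ reproduces the image of $|\xi\ra\la\xi|$ exactly, since each summand transforms as $W|i\ra\la j|W^*\otimes \Le(|v_i\ra\la v_j|)=\lambda_i\lambda_j|u_i\ra\la u_j|\otimes \Le(|v_i\ra\la v_j|)$, and summing over $i,j$ gives $(W\otimes I_d)\,C_k^v(\Le)\,(W\otimes I_d)^*=(\id_k\otimes \Le)(|\xi\ra\la\xi|)$. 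As conjugation $X\mapsto (W\otimes I_d)X(W\otimes I_d)^*$ preserves positive semidefiniteness, the hypothesis $C_k^v(\Le)\geq 0$ forces $(\id_k\otimes \Le)(|\xi\ra\la\xi|)\geq 0$; since $|\xi\ra$ was arbitrary, $\Le$ is $k$-positive.

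The only genuinely delicate point is the mismatch between the fixed computational basis $\{|i\ra\}$ hard-wired into $C_k^v(\Le)$ and the a priori arbitrary Schmidt vectors $\{u_i\}$ of a general $|\xi\ra$; this is exactly what the conjugation by $W$ absorbs, as $W$ is the rescaled change-of-basis operator sending $|i\ra$ to $\lambda_i|u_i\ra$, while the padding step ensures every $|\xi\ra$ of Schmidt rank at most $k$ is covered by an honest orthonormal $k$-subset. I do not expect any serious obstacle beyond bookkeeping this reduction correctly.
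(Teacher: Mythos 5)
Your proof is correct. Note, however, that the paper itself does not prove this proposition at all: it is quoted verbatim as \cite[Lemma 1]{Tom85} (Tomiyama), so there is no internal argument to compare against --- what you have done is supply the missing, self-contained proof. Your argument is the standard one and it is sound in every step: the forward direction follows from $C_k^v(\Le)=(\id_k\otimes\Le)(|\om_v\ra\la\om_v|)$ with $|\om_v\ra=\sum_{i=1}^k|i\ra\otimes|v_i\ra$; the converse correctly reduces $k$-positivity to rank-one inputs by spectral decomposition and linearity, uses the Schmidt decomposition (with rank $r\leq k\leq d$, so the padding to a full orthonormal $k$-subset of $\Comp^d$ is legitimate), and absorbs the mismatch between the computational basis and the Schmidt vectors $\{u_i\}$ via conjugation by $W\otimes I_d$ with $W=\sum_i\lambda_i|u_i\ra\la i|$, which indeed preserves positive semidefiniteness. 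The only cosmetic quibble is that $u_i$ for $r<i\leq k$ is undefined in your formula for $W$, but since you set $\lambda_i=0$ there, those terms vanish and nothing is affected; one could simply write $W=\sum_{i=1}^r\lambda_i|u_i\ra\la i|$. Incidentally, your computation of $(\id_k\otimes\Le)(|\om_v\ra\la\om_v|)$ is exactly the mechanism the paper exploits later in the proof of Theorem \ref{thm-OOkpos}, where $C_k^v(\Le_{p,q})$ is expanded in terms of $|\Om_k^v\ra\la\Om_k^v|$ and $F_k^v$, so your reading of the proposition is consistent with how it is used.
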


The following lemma plays a crucial role in applying the above Proposition \ref{prop-kpos} to prove (2) and (3) of Theorem \ref{thm-OOkpos}.

\begin{lemma} \label{lem-OOoptimization}
For $1\leq k\leq d$, we have
\begin{equation} \label{eq-optimize}
    \min \sum_{j,j'=1}^k|\la {v_j}|\overline{v_{j'}}\ra|^2=\max(2k-d,0),
\end{equation}
where $\overline{w}$ is the complex conjugation of $w\in \Comp^d$ and the minimum is taken over all orthonormal subsets $\set{v_1,\ldots, v_k}\subset \Comp^d$.
\end{lemma}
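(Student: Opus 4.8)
The plan is to recast the quantity as a trace of a product of two orthogonal projections, so that it depends only on the subspace spanned by $v_1,\dots,v_k$ rather than on the chosen orthonormal basis. Writing $U:=\Span\{v_1,\dots,v_k\}$ and letting $Q:=\sum_{j=1}^k |v_j\ra\la v_j|$ be the rank-$k$ orthogonal projection onto $U$, a direct expansion using the identity $\overline{\la v_j|\overline{v_{j'}}\ra}=\la \overline{v_{j'}}|v_j\ra$ gives
\begin{equation*}
\sum_{j,j'=1}^k |\la v_j|\overline{v_{j'}}\ra|^2 = \sum_{j=1}^k \la v_j|\,\overline{Q}\,|v_j\ra=\Tr(Q\overline{Q}),
\end{equation*}
where $\overline{Q}=\sum_{j'}|\overline{v_{j'}}\ra\la \overline{v_{j'}}|$ is precisely the orthogonal projection onto the conjugate subspace $\overline{U}$. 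Thus the objective equals $\Tr(Q\overline{Q})$ and is invariant under the choice of orthonormal basis, so the minimization is really one over all $k$-dimensional subspaces $U\subset \Comp^d$. Equivalently, writing $B(x,y)=\sum_i x_iy_i$ for the standard bilinear form, the summands are $|B(v_j,v_{j'})|^2$, and vanishing of the objective means $U$ is totally isotropic for $B$.

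For the lower bound I would use an elementary projection estimate. Since $0\le Q\le I$ and $\overline{Q}=I-P_{\overline U^\perp}$ with $P_{\overline U^\perp}$ a projection of rank $d-k$, cyclicity of the trace gives
\begin{equation*}
\Tr(Q\overline{Q})=\Tr(Q)-\Tr(Q\,P_{\overline U^{\perp}})\ge k-\Tr(P_{\overline U^\perp})=k-(d-k)=2k-d.
\end{equation*}
On the other hand $\Tr(Q\overline Q)=\Tr(Q\overline Q Q)\ge 0$ because $Q\overline Q Q=(\overline Q\,Q)^*(\overline Q\, Q)$ is positive semidefinite. Combining the two bounds yields $\sum_{j,j'}|\la v_j|\overline{v_{j'}}\ra|^2\ge \max(2k-d,0)$.

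For sharpness I would exhibit an orthonormal family attaining the bound. Set $r:=\max(2k-d,0)$; then $r\le k$ and $2(k-r)\le d-r$, leaving room for the following construction. Take the $r$ real vectors $e_1,\dots,e_r$ together with the $k-r$ isotropic vectors $\tfrac{1}{\sqrt 2}(e_{r+2l-1}+i\,e_{r+2l})$ for $1\le l\le k-r$, which are mutually orthonormal in the Hermitian sense. For each such isotropic vector $w$ one has $B(w,w')=\tfrac12(1+i^2)=0$, and cross terms between distinct isotropic vectors, as well as between a real and an isotropic vector, vanish by disjoint supports; the real vectors contribute $B(e_s,e_t)=\delta_{st}$. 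Hence the only surviving terms are the $r$ diagonal real ones, giving total value $r=\max(2k-d,0)$. The infimum is genuinely attained since the Stiefel manifold of orthonormal $k$-frames is compact.

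The main obstacle is the tightness of the lower bound: one must see that the tension between Hermitian orthonormality and the bilinear isotropy form $B$ forces an unavoidable overlap of size exactly $\dim U+\dim \overline U-d$ whenever $2k>d$. The projection reformulation $\Tr(Q\overline Q)$ is what makes both this dimension count and the matching isotropic construction transparent; once that identity is in place, the remaining estimates are routine.
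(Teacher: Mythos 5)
Your proof is correct. It shares the paper's central reformulation: both you and the authors rewrite the objective as $\Tr(Q\overline{Q})$, the trace of the product of the orthogonal projection onto $U=\Span\{v_1,\dots,v_k\}$ and the projection onto the conjugate subspace $\overline{U}$ (the paper's $\Pi_v$ and $\Pi_{\overline{v}}$). The execution then differs on both halves. For the lower bound, the paper invokes the inequality $\Tr(\Pi_v\Pi_{\overline{v}})\geq \dim\bigl({\rm Ran}(\Pi_v)\cap {\rm Ran}(\Pi_{\overline{v}})\bigr)$ followed by a dimension count; your estimate $\Tr(Q\overline{Q})=\Tr(Q)-\Tr(QP_{\overline{U}^{\perp}})\geq k-(d-k)$ is more elementary, since it needs only $\Tr(QP)\leq \Tr(P)$ for a contraction $Q$ and a projection $P$, and it sidesteps the intersection-dimension inequality (which itself requires a short operator argument such as $PQP\geq R$ for $R$ the projection onto the intersection). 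For attainment, the difference is genuine: the paper treats the two regimes separately, using purely isotropic vectors $\frac{1}{\sqrt{2}}(|2j-1\ra+i|2j\ra)$ when $2k\leq d$ and Fourier vectors $\frac{1}{\sqrt{d}}\sum_{l}\om^{l(j-1)}|l\ra$ when $2k>d$ (where the value $2k-d$ emerges from counting solutions of $j+j'-2\equiv 0 \bmod d$), whereas you use a single hybrid family of $r=\max(2k-d,0)$ real basis vectors together with $k-r$ isotropic pairs, which covers both cases uniformly and makes the unavoidable overlap of size $r$ visible by inspection. Your route is arguably cleaner and more self-contained; the paper's Fourier construction has the virtue of a single closed formula for the extremizer in the hard regime, but your version proves the same sharp bound with less case analysis.
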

\begin{proof}
If $2k\leq d$, then we can take $|v_j\ra=\frac{1}{\sqrt{2}}(|2j-1\ra+i|2j\ra)$ ($j=1,\ldots, k$) and check $\la {v_j}|\overline{v_{j'}}\ra=0$ for every $j,j'$, so the equality \eqref{eq-optimize} holds. From now, let us focus on the case $2k>d$. First, we consider an arbitrary orthonormal subset $\set{v_1,\ldots, v_k}\subset \Comp^d$ and orthogonal projections $\Pi_v=\sum_{j=1}^k |v_j\ra\la v_j|$ and $\Pi_{\overline{v}}=\sum_{j=1}^k |\overline{v_j}\ra \la \overline{v_j}|$. Then we can observe that
    \begin{align*}
        \sum_{j,j'} |\la {v_j}|\overline{v_{j'}}\ra|^2=\text{Tr}(\Pi_v \Pi_{\overline{v}})&\geq \text{dim}(\text{Ran}(\Pi_{v})\cap \text{Ran}(\Pi_{\overline{v}}))
        \end{align*}
        and the right hand side is equal to
        \begin{align*}
        \text{dim}(\text{Ran}(\Pi_{v}))+\text{dim}(\text{Ran}(\Pi_{\overline{v}}))-\text{dim}(\text{Ran}(\Pi_{v})+\text{Ran}(\Pi_{\overline{v}})).
    \end{align*}
 Thus we have $\sum_{j,j'} |\la {v_j}|\overline{v_{j'}}\ra|^2 \geq 2k-d$. On the other hand, let us take a specific orthonormal subset $\left\{v_1,v_2,\cdots,v_k\right\}\subseteq \Comp^d$ where $|v_j\ra=\frac{1}{\sqrt{d}}\sum_{l=1}^d\om^{l(j-1)}|l\ra$ ($j=1,\ldots, k$) and $\om=\exp(\frac{2\pi i}{d})$. Then we can check that the desired equality $\sum_{j,j'}|\la {v_j}|\overline{v_{j'}}\ra|^2=2k-d$ holds from the relation
    $$\la {v_j}|\overline{v_{j'}}\ra=\frac{1}{d}\sum_{l=1}^d \om^{-l(j+j'-2)}=\begin{cases} 1 & \text{if $j+j'-2\equiv 0$ mod $d$,} \\ 0 & \text{otherwise}.\end{cases}$$
\end{proof}

\begin{proof}[\textbf{Proof of Theorem \ref{thm-OOkpos} (2) and (3)}]
For an orthonormal subset $\set{v_1,\ldots, v_k}$ in $\Comp^d$, the associated bipartite matrix in \eqref{eq40} is given by
    $$C_k^v(\Le_{p,q})=\frac{1-p-q}{d}I_k\otimes I_d+pk|\Om_k^v\ra\la \Om_k^v|+qF_k^v,$$
where $\begin{cases} |\Om_k^v\ra=\frac{1}{\sqrt{k}}\sum_{j=1}^k|j\ra\otimes |v_j\ra\in \Comp^k\otimes \Comp^d,\\
F_k^v=\sum_{i,j=1}^k|i\ra\la j|\otimes |\overline{v_j}\ra\la \overline{v_i}|\in \M{k}\otimes \M{d}.\end{cases}$ 
Moreover, we can write $F_k^v=\Pi_{\mathcal{S}}^v-\Pi_{\mathcal{A}}^v$, where $\Pi_{\mathcal{S}}^v$ is the projection onto the (symmetric) space ${\rm span}\set{\frac{|i\ra|\overline{v_j}\ra+|j\ra|\overline{v_i}\ra}{\sqrt{2}}: 1\leq i\leq j\leq k}$ and $\Pi_{\mathcal{A}}^v$ is the projection onto the (anti-symmetric) space ${\rm span}\set{\frac{|i\ra|\overline{v_j}\ra-|j\ra|\overline{v_i}\ra}{\sqrt{2}}: 1\leq i< j\leq k}$. Note that ${\rm Ran}(\Pi_{\mathcal{S}}^v)\perp {\rm Ran}(\Pi_{\mathcal{A}}^v)$, and $|\Om_k^v\ra\perp {\rm Ran}(\Pi_{\mathcal{A}}^v)$ since
    $$\left \la \Om_k^v\Bigg|\frac{e_i\otimes \overline{v_j}-e_j\otimes \overline{v_i}}{\sqrt{2}}\right \ra=\frac{1}{\sqrt{2k}}(\la v_i|\overline{v_j}\ra- \la v_j|\overline{v_i}\ra)=0,$$
    for all $1\leq i<j\leq k$. Therefore, after rewriting
    $$C_k^v(\Le_{p,q})=\left(A(I_{kd}-\Pi_{\mathcal{A}}^v)+pk|\Om_k^v\ra\la \Om_k^v|+q\Pi_{\mathcal{S}}^v\right)\oplus \left(A-q\right)\Pi_{\mathcal{A}}^v$$
with $A:=\frac{1-p-q}{d}$ for an orthogonal decomposition, the desired condition $C_k^v(\Le_{p,q})\in \pos_{kd}$ is equivalent to $A-q\geq 0$ and 
\begin{equation}\label{eq41}
    A(I_{kd}-\Pi_{\mathcal{A}}^v)+pk|\Om_k^v\ra\la \Om_k^v|+q\Pi_{\mathcal{S}}^v\in \pos
\end{equation} 
(the condition $k\geq 2$ ensures that $\Pi_{\mathcal{A}}^v$ is nonzero). 

A technical difficulty on \eqref{eq41} is that $|\Om_k^v\ra\la \Om_k^v|$ and $\Pi_{\mathcal{S}}^v$ are not simultaneously diagonalizable since
\[|\Om_k^v\ra\la \Om_k^v| \cdot \Pi_{\mathcal{S}}^v \neq \Pi_{\mathcal{S}}^v \cdot |\Om_k^v\ra\la \Om_k^v| \] 
in general unless $\set{v_i}_{i=1}^k\subset \Real^d$. To overcome this, let us take $|\xi_1\ra=\Pi_{\mathcal{S}}^v|\Om_k^v\ra\in {\rm Ran}(\Pi_{\mathcal{S}}^v)$ and consider $|\Om_k^v\ra=|\xi_1\ra+|\xi_2\ra$. Then 
\[\xi_2\perp ({\rm Ran}(\Pi_{\mathcal{S}}^v)\cup {\rm Ran}(\Pi_{\mathcal{A}}^v)).\]
Moreover, we have the following block matrix decomposition
\begin{align} \label{eq-OOblock}
    &A(I_{kd}-\Pi_{\mathcal{A}}^v)+pk|\Om_k^v\ra\la \Om_k^v|+q\Pi_{\mathcal{S}}^v \nonumber \\
    &\cong\footnotesize\begin{pmatrix} (A+q)\Pi_{\mathcal{S}}^v+pk|\xi_1\ra\la \xi_1| & pk|\xi_1\ra\la \xi_2| \\ pk|\xi_2\ra\la \xi_1| & A(I-\Pi_{\mathcal{S}}^v-\Pi_{\mathcal{A}}^v)+pk|\xi_2\ra\la \xi_2|\end{pmatrix}.
\end{align}
An important fact to note on \eqref{eq-OOblock} is that  ${\rm rank}(\Pi_{\mathcal{S}}^v)=\frac{k(k+1)}{2}\geq 2$ and ${\rm rank}(I-\Pi_{\mathcal{S}}^v-\Pi_{\mathcal{A}}^v)=d^2-k^2\geq 2$ since $1<k<d$. Therefore, the block matrix in \eqref{eq-OOblock} is positive semidefinite if and only if
\begin{equation} \label{eq-OOblock2}
    \begin{cases}
        (a)& A+q\geq 0,\\
        (b)& A\geq 0,\\
        (c)& A+q+pk\|\xi_1\|^2\geq 0,\\
        (d)& A+pk\|\xi_2\|^2\geq 0,\\
        (e)& \left(A+q+pk\|\xi_1\|^2\right)\left(A+pk\|\xi_2\|^2\right)\geq (pk)^2\|\xi_1\|^2\|\xi_2\|^2.
    \end{cases}
\end{equation}
Since $\|\xi_1\|^2+\|\xi_2\|^2=\|\Om_k^v\|^2=1$, the conditions $(d)$ and $(e)$ can be understood as
\begin{equation}
    \begin{cases}
        (d')& A+pk-pk\|\xi_1\|^2\geq 0,\\
        (e')& (A+q)(A+pk)-pkq\|\xi_1\|^2\geq 0.
    \end{cases}
\end{equation}
Note that the first two conditions (a) and (b) are independent of the choices of an orthonormal subset $\set{v_i}_{i=1}^k\subset \Comp^d$, and that  the other inequalities in $(c)$, $(d')$ and $(e')$ are linear in $\|\xi_1\|^2$. Since the inequalities in $(c)$, $(d')$ and $(e')$ should hold for all possible choices of $\set{v_i}_{i=1}^k\subseteq \Comp^d$, it suffices to calculate the maximum and minimum values of $\|\xi_1\|^2$.

Recall that $|\Om_k^v\ra=|\xi_1\ra\in {\rm Ran}(\Pi_{\mathcal{S}}^v)$ whenever $\set{v_i}_{i=1}^k\subset \Real^d$ (choose $|v_i\ra=|i\ra$ for example), so the maximum of $\|\xi_1\|^2$ is 1. For the minimum of $\|\xi_1\|^2$, the following expression of $\|\xi_1\|^2$ in terms of $v_1,v_2,\cdots,v_k$
\begin{align*}
    \|\xi_1\|^2&=\la \Om_k^v|\Pi_{\mathcal{S}}^v|\Om_k^v\ra\\
    &=\la \Om_k^v|F_k^v|\Om_k^v\ra \quad (\,\because |\Om_k^v\ra\perp {\rm Ran}(\Pi_{\mathcal{A}}^v))\\
    &=\frac{1}{k}\sum_{j,j'}\la v_j|\overline{v_{j'}}\ra \cdot \la \overline{v_{j'}}|v_j\ra=\frac{1}{k}\sum_{j,j'}|\la v_j|\overline{v_{j'}}\ra|^2.
\end{align*}
allows us to apply  Lemma \ref{lem-OOoptimization} to conclude that $\min \|\xi_1\|^2=\frac{\max(2k-d,0)}{k}$.

To summarize, $\Le_{p,q}\in \PO_k$ if and only if the six inequalities $(a)$, $(b)$, $(c)$, $(d')$, $(e')$, and $A-q\geq 0$ hold for $\|\xi_1\|^2\in \set{1, m:=\frac{\max(2k-d,0)}{k}}$ and $A=\frac{1-p-q}{d}$. 
For the cases $1<k\leq d/2$, we have $m=0$ and obtain the inequalities in (2). Also, for the cases $d/2<k<d$, we have $m=\frac{2k-d}{k}$ and obtain the inequalities in (3). In particular, the inequality $f_k(x,y)\leq 0$ is coming from $(e')$ with $\|\xi_1\|^2=\frac{2k-d}{k}$.
\end{proof}

\subsection{Schmidt numbers of orthogonally invariant quantum states}

Now we are almost ready to compute the Schmidt numbers of all quantum states of the form
\begin{equation} \label{eq-OOInv2}
    \rho_{a,b}:=\frac{1-a-b}{d^2}I_d\otimes I_d+a|\Om_d\ra\la \Om_d|+\frac{b}{d}{F_d}
\end{equation}

Let us denote by $S_k:=\set{(a,b)\in \Real^2:\rho_{a,b}\in \sch_k}$. The main aim of this Section is to prove $S_1\subsetneq S_2\subsetneq \cdots \subsetneq S_d$ with both geometric and algebraic descriptions. Our strategy is to combine Theorem \ref{cor-invSch} and the explicit descriptions of $P_k=\set{(p,q)\in \Real^2:\Le_{p,q}\in \PO_k}$ (Theorem \ref{thm-OOkpos}). 

First of all, Theorem \ref{cor-invSch} implies that we have 
\begin{align}
&{\rm SN}(\rho_{a,b})\leq k \nonumber \\
& \iff \la \Om_d|(\id_d\otimes  \Le_{p,q})(\rho_{a,b})|\Om_d\ra\geq 0 \text{ for all }(p,q)\in {\rm Ext}(P_k) \nonumber\\
    & \iff \begin{pmatrix} p & q \end{pmatrix}\begin{pmatrix}d+1 & 1 \\ 1 & d+1\end{pmatrix}\binom{a}{b}\geq -\frac{1}{d-1} \text{ for all }(p,q)\in {\rm Ext}(P_k) \label{eq-OOparing}. 
\end{align}

Let us denote by 
\begin{equation}\label{eq43}
H_{p,q}=\left\{(x,y)\in \mathbb{R}^2: px+qy\leq 1 \right\}
\end{equation}
for all $(p,q)\in \mathbb{R}^2\setminus\left\{(0,0)\right\}$, and by $\alpha:\mathbb{R}^2\rightarrow \mathbb{R}^2$ a linear isomorphism given by 
\begin{equation}\label{eq44}
    \alpha: \left (\begin{array}{cc} x\\y \end{array} \right ) \mapsto -(d-1)\left ( \begin{array}{cc}
d+1&1\\
1&d+1
\end{array} \right ) \left (\begin{array}{cc} x\\y \end{array} \right ).
\end{equation}
Then we have the following identity:
\begin{equation}\label{eq42}
    S_k=\bigcap_{(p,q)\in {\rm Ext}(P_k)}H_{\alpha(p,q)}.
\end{equation}

This is where a detailed geometric analysis of $P_k$ (Theorem \ref{thm-OOkpos}) manifests its efficacy. Indeed, Theorem \ref{thm-OOkpos} states that the geometric structures of $P_k$ are categorized into four distinct cases $\left\{\begin{array}{llll}(1)&k=1\\ 
(2) & 1<k\leq \frac{d}{2}\\
(3) & \frac{d}{2}<k<d\\
(4) & k=d
    \end{array} \right .$. Furthermore, for the three cases (1), (2), (4), the associated regions $P_k$ are compact convex sets with at most four extreme points. Thus, it is enough to use at most four Schmidt number witnesses $\Le_{p,q}$ to determine $S_k$ by Theorem \ref{cor-invSch} and \eqref{eq42}, and the consequence is that $S_k$ is an intersection of at most four closed half-planes for the three cases (1), (2), (4). All our discussions above are summarized into the following theorem.

\begin{theorem} \label{thm-OOSch}
Let $\rho_{a,b}$ be a bipartite matrix of the form \eqref{eq-OOInv} and $1\leq k\leq d$. Then we have
    $$S_k=\bigcap_{(p,q)\in {\rm Ext}(P_k)}H_{\alpha(p,q)}$$
where $\alpha$ and $H_{p,q}$ are from \eqref{eq43} and \eqref{eq44}. Moreover, we have the following algebraic descriptions for the three cases $\left\{\begin{array}{llll}(1)&k=1,\\ 
(2) & 1<k\leq \frac{d}{2},\\
(4) & k=d.
\end{array} \right .$
\begin{enumerate}
    \item [(1)]$\rho_{a,b}\in \SEP$ if and only if $\begin{cases} -\frac{1}{d-1}\leq (d+1)a+b\leq  1, \\ -\frac{1}{d-1}\leq a+(d+1)b\leq 1. \end{cases}$
    
    \item [(2)]$\rho_{a,b}\in \sch_k$ ($1<k\leq \frac{d}{2}$) if and only if $\begin{cases} -\frac{1}{d-1}\leq (d+1)a+b\leq  \frac{kd-1}{d-1}, \\ a+(d+1)b\leq 1, \\ -\frac{d-k+1}{kd+k-1}a+b\geq -\frac{1}{d-1}.\end{cases}$
    
    \item [(4)]$\rho_{a,b}\in \sch_d=\pos$ if and only if $\begin{cases} a-(d-1)b\leq 1, \\ a+(d+1)b\leq 1, \\ (d+1)a+b\geq -\frac{1}{d-1}.\end{cases}$ 
\end{enumerate}
\end{theorem}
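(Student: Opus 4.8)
The plan is to read off both the duality identity and the explicit inequalities directly from the two inputs already assembled in the excerpt: Theorem \ref{cor-invSch} applied with the mapping cone $\SP_k$ (whose dual is $\PO_k$) and the complete description of $P_k$ in Theorem \ref{thm-OOkpos}. Since $\pi_A=\pi_B$ is the standard representation of $\mathcal{O}(d)$ and every $\Le_{a,b}$ already satisfies $\Tr C_{\Le_{a,b}}=\Tr \rho_{a,b}=1$, the set $\Covn(O,O)\cap \PO_k$ is exactly $\set{\Le_{p,q}:(p,q)\in P_k}$, so that ${\rm Ext}(\Covn(O,O)\cap \PO_k)=\set{\Le_{p,q}:(p,q)\in {\rm Ext}(P_k)}$. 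Feeding this into Theorem \ref{cor-invSch} (with $\pi_B=\pi_A$) gives the first equivalence of \eqref{eq-OOparing}: ${\rm SN}(\rho_{a,b})\leq k$ if and only if $\la \Om_d|(\id_d\otimes \Le_{p,q})(\rho_{a,b})|\Om_d\ra\geq 0$ for every extreme point $(p,q)$ of $P_k$.

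The technical heart is the second equivalence, namely the explicit evaluation of this pairing. Using $\la \Om_d|(A\otimes B)|\Om_d\ra=\frac{1}{d}\Tr(AB^{\top})$ together with the unitality $\Le_{p,q}(I_d)=I_d$ and the expansions $|\Om_d\ra\la \Om_d|=\frac{1}{d}\sum_{i,j}|i\ra\la j|\otimes |i\ra\la j|$ and $F_d=\sum_{i,j}|i\ra\la j|\otimes |j\ra\la i|$, I would compute the three terms of $\rho_{a,b}$ separately and collect the coefficients of $p$ and $q$. The outcome is
\begin{equation*}
    \la \Om_d|(\id_d\otimes \Le_{p,q})(\rho_{a,b})|\Om_d\ra=\frac{1}{d^2}\Big(1+(d-1)\big(p[(d+1)a+b]+q[a+(d+1)b]\big)\Big),
\end{equation*}
so nonnegativity is equivalent to $p[(d+1)a+b]+q[a+(d+1)b]\geq -\frac{1}{d-1}$, which is precisely the symmetric bilinear form with matrix $\begin{pmatrix} d+1 & 1 \\ 1 & d+1\end{pmatrix}$ in \eqref{eq-OOparing}. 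Unwinding the definitions \eqref{eq43} and \eqref{eq44} of $H_{p,q}$ and $\alpha$, this inequality reads exactly $(a,b)\in H_{\alpha(p,q)}$, and intersecting over $(p,q)\in {\rm Ext}(P_k)$ yields the general identity $S_k=\bigcap_{(p,q)\in {\rm Ext}(P_k)}H_{\alpha(p,q)}$, valid for every $k$.

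It then remains to make the intersection explicit in the three cases (1), (2), (4), where Theorem \ref{thm-OOkpos} exhibits $P_k$ as a polytope with at most four vertices. For each listed vertex $(p,q)$ I would substitute into $p[(d+1)a+b]+q[a+(d+1)b]\geq -\frac{1}{d-1}$ and simplify, clearing the positive denominators $kd-1$, $kd+k-1$, and $d^2+d-2=(d+2)(d-1)$. For example, the vertex $(-\tfrac{1}{kd-1},0)$ gives $(d+1)a+b\leq \frac{kd-1}{d-1}$, the vertex $(-\tfrac{1}{kd+k-1},\tfrac{k}{kd+k-1})$ gives $-\frac{d-k+1}{kd+k-1}a+b\geq -\frac{1}{d-1}$, and the vertex $(-\tfrac{2}{d^2+d-2},\tfrac{d}{d^2+d-2})$ gives $a-(d-1)b\leq 1$; assembling the half-planes in each case reproduces the stated systems. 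The case $k=d$ doubles as a consistency check, since $\sch_d=\pos$ and the three inequalities there are exactly positivity of $\rho_{a,b}$.

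The only genuine difficulty is deliberately packaged outside this statement: in the range $\frac{d}{2}<k<d$ the boundary of $P_k$ contains a conic arc, so ${\rm Ext}(P_k)$ is an infinite family and $\bigcap_{(p,q)}H_{\alpha(p,q)}$ is cut out by a dual conic rather than by finitely many lines, and that analysis is deferred to Theorem \ref{thm-OOSch2}. Within the scope of the present theorem the substantive work is just the pairing computation above; once the emergent quadratic form is in hand, the three polytope cases are routine vertex substitutions.
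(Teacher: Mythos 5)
Your proposal is correct and follows essentially the same route as the paper: the paper also proves this theorem by applying Theorem \ref{cor-invSch} with $\K=\SP_k$ (so that the witnesses are ${\rm Ext}(\Covn(O,O)\cap\PO_k)=\set{\Le_{p,q}:(p,q)\in{\rm Ext}(P_k)}$), evaluating the pairing to obtain the bilinear condition $\begin{pmatrix} p & q\end{pmatrix}\begin{pmatrix} d+1 & 1\\ 1 & d+1\end{pmatrix}\begin{pmatrix} a\\ b\end{pmatrix}\geq -\frac{1}{d-1}$, rewriting it as $(a,b)\in H_{\alpha(p,q)}$, and then substituting the finitely many vertices of $P_k$ from Theorem \ref{thm-OOkpos} in the cases $k=1$, $1<k\leq\frac{d}{2}$, and $k=d$. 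Your explicit pairing formula and the three sample vertex substitutions all check out, and, like the paper, you correctly defer the conic case $\frac{d}{2}<k<d$ to the separate analysis of Theorem \ref{thm-OOSch2}.
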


We should remark that the remaining case (3) is quite different from the other cases since there are infinitely many extreme points in $P_k$. In this case, we will utilize some elementary geometric tools from projective geometry to overcome the technical issue.  Indeed, we need a quadratic curve to describe $S_k$ for the cases $\frac{d}{2}<k<d$. This excluded case (3) will be discussed with details independently in Subsection \ref{sec-SN-hard}.

Although we postpone the proof of the remaining case $\frac{d}{2}<k<d$ to Subsection \ref{sec-SN-hard}, let us exhibit a visualized geometric structures of $S_1$, $S_2$, $\cdots$, $S_d$ in the following Figure \ref{fig-OOSch}, particularly for the cases $d=3$ and $d=4$.

\begin{figure}[htb!]
    \centering
    \includegraphics[scale=0.31]{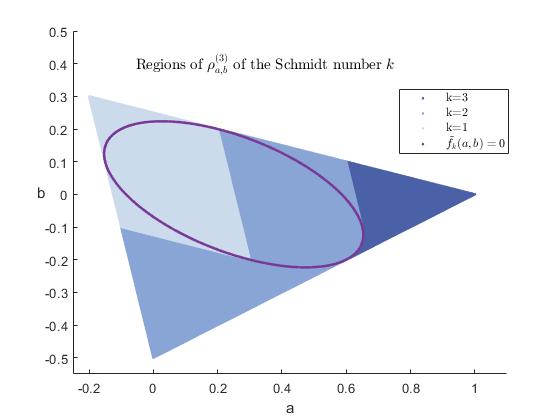} \includegraphics[scale=0.31]{230714_4.jpg}
    \caption{Regions of $\rho^{(d)}_{a,b}$ of the Schmidt number $k$ for $d=3,4$}
    \label{fig-OOSch}
\end{figure}

As in the case of $k$-positivity of $\Le_{p,q}$, the geometric structures of the convex subsets $S_k$ can be categorized into the following four distinct cases.

\vspace{4pt}

\begin{enumerate}
    \item The region $S_1$ is \textit{rhombus-shaped} with vertices $(-\frac{2}{d^2+d-2},\frac{d}{d^2+d-2})$, $(\frac{1}{d+2}, \frac{1}{d+2})$, $(\frac{d}{d^2+d-2},-\frac{2}{d^2+d-2})$, and $(-\frac{1}{d^2+d-2},-\frac{1}{d^2+d-2})$.

    \vspace{3pt}
    
    \item If $1<k\leq \frac{d}{2}$, then the region $S_k$ is \textit{trapezoid-shaped} with vertices $(-\frac{2}{d^2+d-2},\frac{d}{d^2+d-2})$, $(\frac{kd+k-2}{d^2+d-2}, \frac{d-k}{d^2+d-2})$, $(\frac{kd+k-1}{d^2+d-2},-\frac{k+1}{d^2+d-2})$, and $(0,-\frac{1}{d-1})$.

    \vspace{3pt}
    
    \item If $\frac{d}{2}<k<d$, then the region $S_k$ is bounded by a piecewise-linear curve joinig $(\frac{d}{3d-2k}, -\frac{2d-2k}{(d-1)(3d-2k)})$, $(0,-\frac{1}{d-1})$, $(-\frac{2}{d^2+d-2},\frac{d}{d^2+d-2})$, $(\frac{kd+k-2}{d^2+d-2}, \frac{d-k}{d^2+d-2})$ and  $(\frac{k^2d+k^2+d-3k}{k(d^2+d-2)}, -\frac{(d-k+1)(d-k)}{k(d^2+d-2)})$ in that order, and then joined smoothly by an ellipse from $(\frac{k^2d+k^2+d-3k}{k(d^2+d-2)}, -\frac{(d-k+1)(d-k)}{k(d^2+d-2)})$ to $(\frac{d}{3d-2k}, -\frac{2d-2k}{(d-1)(3d-2k)})$.

    \vspace{3pt}
    
    \item The region $S_d$ is the same with $P_d=\set{(a,b): \Le_{a,b}\in \CP}$, i.e. $S_d$ is a \textit{triangle} with vertices $(1,0), (0,-\frac{1}{d-1})$, and $(-\frac{2}{d^2+d-2}, \frac{d}{d^2+d-2})$.
\end{enumerate}

\subsubsection{Algebraic descriptions of $S_k$ for the cases $\frac{d}{2}<k<d$}\label{sec-SN-hard}

Let us focus on explicit algebraic descriptions of 
\begin{equation} \label{eq-OOSch}
    S_k=\bigcap_{(p,q)\in {\rm Ext}(P_k)}H_{\alpha(p,q)}=\alpha^{-1}\bigg(\bigcap_{(p,q)\in {\rm Ext}(P_k)}H_{p,q}\bigg)
\end{equation} 
for the cases $\frac{d}{2}<k<d$. 
In this case, we have
\footnotesize
    $${\rm Ext}(P_k)=\set{\left(\frac{-2}{d^2+d-2},\frac{d}{d^2+d-2}\right), (1,0), \left(0,\frac{-1}{d-1}\right), \left(\frac{-1}{kd-1},0\right)}\,\cup\, C_k$$
\normalsize
by Theorem \ref{thm-OOkpos} (3). Here, $C_k$ is a conic arc in the second quadrant and is parametrized by a smooth, regular, and strictly convex curve
\begin{equation} \label{eq-conic-Ck}
    \gamma:[0,1]\rightarrow C_k
\end{equation} 
satisfying $f_k(\gamma(t))\equiv 0$, $\gamma(0)=(-\frac{1}{kd-1},0)$ and $\gamma(1)=(-\frac{2}{d^2+d-2},\frac{d}{d^2+d-2})$. Then \eqref{eq-OOSch} implies that $\tilde{S}_k:=\alpha(S_k)$ is given by
    \begin{align}
    \label{eq45}&H_{-\frac{2}{d^2+d-2},\frac{d}{d^2+d-2}}\cap H_{1,0}\cap H_{0,-\frac{1}{d-1}}\cap H_{-\frac{1}{kd-1},0}\cap \bigcap_{t\in [0,1]}H_{\gamma(t)},
    \end{align}
and the most technical problem is to demonstrate that $\displaystyle \bigcap_{t\in [0,1]}H_{\gamma(t)}$ is a convex set bounded by two lines and one conic arc as in the following Figure \ref{fig-intersection1}.

    \begin{figure}[htb!]
    \centering
    \includegraphics[scale=0.55]{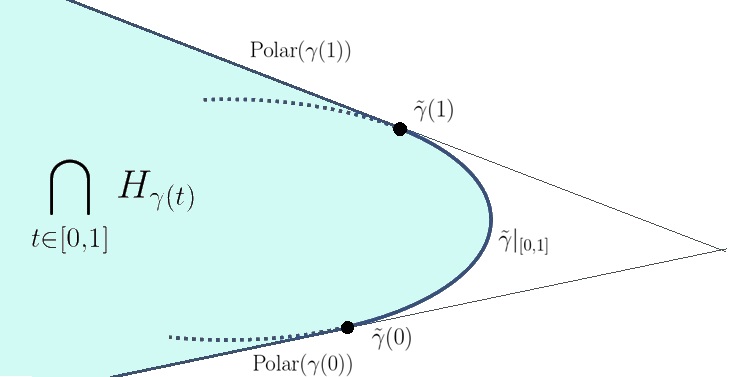}
    \caption{Geometric description of the intersection $\displaystyle \bigcap_{t\in [0,1]} H_{\gamma(t)}$}
    \label{fig-intersection1}
\end{figure}

Here, we need to explain the {\it dual curve} and the {\it pole-polar duality} from projective geometry \cite{BK86, Cox03}. Firstly, we have an explicit formula for the dual curve $\tilde{\gamma}$ of a strictly convex smooth curve $\gamma$. Recall that a plane curve $\gamma: I\to \Real^2$ defined on an open interval $I$ is called \textit{strictly convex} if the number of intersection points between $\gamma$ and an arbitrary line is at most $2$. If $\gamma$ is smooth, then the strict convexity of $\gamma$ is equivalent to that for every $t\in I$, the image of $\gamma$ is contained in the same half-plane whose boundary is the tangent line $l_t$ at $\gamma(t)$, and $\gamma(t)$ is the unique intersection point of $l_t$ and $\gamma$. For a strictly convex smooth curve $\gamma=(p(t),q(t))$ (with additional conditions in Lemma \ref{lem-curve}), we define its dual curve $\tilde{\gamma}$ by
\begin{equation}\label{eq-dual-curve}
    \tilde{\gamma}(t):=\left(\frac{q'(t)}{p(t)q'(t)-q(t)p'(t)}, \frac{-p'(t)}{p(t)q'(t)-q(t)p'(t)}\right).
\end{equation}

Secondly, the pole-polar duality is a bijective correspondence between $\mathbb{R}^2\setminus \left\{(0,0)\right\}$ and the set of all lines that are not passing through the origin $(0,0)$ in $\mathbb{R}^2$. Associated to $(p,q)\in \mathbb{R}^2\setminus \left\{(0,0)\right\}$ is a line
$$l=\left\{(x,y)\in \mathbb{R}^2: px+qy=1\right\}.$$
In this case, we call $l$ the \textit{polar} of $P$ and $P$ the \textit{pole} of $l$ (with respect to the unit circle $C: x^2+y^2=1$), and denote by $l=:{\rm Polar}(P)$ and $P=:{\rm Pole}(l)$ respectively. Note that we have
    $$\tilde{\gamma}(t)={\rm Pole}(l_t)$$ 
for all $t\in I$, where $l_t$ denotes the line tangent to $\gamma$ at $\gamma(t)$.

The following Lemma \ref{lem-curve} establishes the connection between the dual curve $\tilde{\gamma}$ and the intersection $\bigcap_{(p,q)\in C_k}H_{p,q}$. This seems a well-known fact, but we provide a proof for readers' convenience.

\begin{lemma} \label{lem-curve}
Let $I$ be an open interval and $\gamma: I\to \Real^2\setminus \set{(0,0)}$ be a smooth, regular, and strictly convex curve. Suppose that for every $t\in I$, the tangent lines $l_t$ at $\gamma(t)$ do not pass through the origin $(0,0)$, and the origin is in the same (closed) half-plane with $\gamma$ with respect to $l_t$. Then the dual curve $\tilde{\gamma}:I\to \Real^2\setminus\set{(0,0)}$ from \eqref{eq-dual-curve} satisfies the following properties.
\begin{enumerate}
    \item ${\rm Polar}(\gamma(t))$ is tangent to $\tilde{\gamma}$ at $\tilde{\gamma}(t)$ for each $t\in I$. 

    \item $\tilde{\gamma}$ is smooth and strictly convex, and the origin is in the same half-plane with $\widetilde{\gamma}$ with respect to ${\rm Polar}(\gamma(t))$.

    \item For any closed interval $[t_0,t_1]\subset I$, the intersection
        $$\bigcap_{t\in [t_0,t_1]} H_{\gamma(t)}$$
    is the largest convex region containing $(0,0)$, which is bounded by two lines ${\rm Polar}(\gamma(t_0))$ and ${\rm Polar}(\gamma(t_1))$ as well as the dual curve $\tilde{\gamma}|_{[t_0,t_1]}$ (as in Figure \ref{fig-intersection1}).

    \item If $\gamma$ represents a connected part of a conic, then so is $\tilde{\gamma}$.
\end{enumerate}
\end{lemma}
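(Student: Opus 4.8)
The plan is to route all four parts through one explicit computation of $\tilde\gamma'$, after which (1)--(3) become elementary convex geometry and (4) reduces to classical pole--polar duality. Throughout write $\gamma(t)=(p(t),q(t))$ and $\tilde\gamma=(\tilde\gamma_1,\tilde\gamma_2)$, and introduce the Wronskian-type determinant $D(t):=p(t)q'(t)-q(t)p'(t)$ and the curvature numerator $E(t):=p'(t)q''(t)-p''(t)q'(t)$. The hypotheses translate into clean sign conditions: a tangent line $l_t$ passes through the origin exactly when $\gamma(t)\parallel\gamma'(t)$, i.e.\ $D(t)=0$, so the assumption that no $l_t$ meets the origin gives $D\neq 0$, hence $D$ has constant sign on $I$; strict convexity of $\gamma$ forces $E\neq 0$ of constant sign; and a second-order expansion of the linear form $s\mapsto q'(t)p(s)-p'(t)q(s)$ defining $l_t$ shows that ``the origin lies on the same side of $l_t$ as $\gamma$'' is equivalent to $E/D>0$. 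With this in hand I would first prove (1): substituting \eqref{eq-dual-curve} into $px+qy$ gives $p\tilde\gamma_1+q\tilde\gamma_2=(pq'-qp')/D=1$, so $\tilde\gamma(t)\in{\rm Polar}(\gamma(t))$; differentiating this incidence relation and using the immediate identity $p'\tilde\gamma_1+q'\tilde\gamma_2=0$ yields $p\tilde\gamma_1'+q\tilde\gamma_2'=0$. Equivalently, a direct differentiation of \eqref{eq-dual-curve} produces the key formula
\[
\tilde\gamma'(t)=\frac{E(t)}{D(t)^2}\bigl(-q(t),\,p(t)\bigr),
\]
which is parallel to the direction $(-q,p)$ of ${\rm Polar}(\gamma(t))$, so that line is tangent to $\tilde\gamma$ at $\tilde\gamma(t)$.

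For (2), smoothness and regularity of $\tilde\gamma$ are immediate from the displayed formula together with $D\neq 0$, $E\neq 0$. Computing one further derivative gives the curvature numerator of $\tilde\gamma$ as $\tilde\gamma_1'\tilde\gamma_2''-\tilde\gamma_1''\tilde\gamma_2'=E^2/D^3$, whose sign equals that of $D$; being constant and nonzero, $\tilde\gamma$ has nonvanishing curvature of fixed sign and is therefore strictly convex by the smooth characterization recalled just before the lemma. To place the origin correctly, fix $t$ and set $g(s):=p(t)\tilde\gamma_1(s)+q(t)\tilde\gamma_2(s)$; then $g(t)=1$, $g'(t)=0$, and $g''(t)=-E(t)/D(t)<0$ by the sign condition above, so near $s=t$ the arc lies in $\{g\le 1\}$, the side of ${\rm Polar}(\gamma(t))$ containing the origin. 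Strict convexity then forces the whole of $\tilde\gamma$ onto that side, i.e.\ into $H_{\gamma(t)}$.

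The core of the argument is (3), and this is the step I expect to be the main obstacle. Parts (1)--(2) already exhibit $\{H_{\gamma(t)}\}_{t\in[t_0,t_1]}$ as exactly the family of supporting half-planes of the convex arc $\tilde\gamma|_{[t_0,t_1]}$, each containing both the origin and the entire arc; consequently $\tilde\gamma([t_0,t_1])\subset\partial R$ for $R:=\bigcap_{t\in[t_0,t_1]}H_{\gamma(t)}$, since $\tilde\gamma(t)\in\partial H_{\gamma(t)}$ while lying in every $H_{\gamma(s)}$. The delicate point is to identify the remainder of $\partial R$: because the half-plane boundary direction $(-q,p)$ turns monotonically (constant sign of $D$), the constraint indexed by $t$ is the binding one precisely along the arc, so beyond the endpoints $\tilde\gamma(t_0),\tilde\gamma(t_1)$ the boundary is carried by the two extreme lines ${\rm Polar}(\gamma(t_0))$ and ${\rm Polar}(\gamma(t_1))$. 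Making this ``capping off'' rigorous --- showing that no interior tangent line severs part of the arc, and that the two end lines close the region on the origin's side --- is the technical heart; I would carry it out by tracking the monotone rotation of the line family and invoking that a strictly convex arc is the envelope of its tangent lines, which is exactly what \eqref{eq-dual-curve} and part (1) encode.

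Finally, (4) is cleanest via pole--polar duality in homogeneous coordinates. Represent the conic carrying $\gamma$ by a symmetric $3\times 3$ matrix $M$, so its tangent lines $\mathbf u$ satisfy $\mathbf u^{\top}M^{*}\mathbf u=0$, where $M^{*}$ is the adjugate (the dual conic in line coordinates). Pole--polar duality with respect to the unit circle is the involution $N=\mathrm{diag}(1,1,-1)$, which sends the tangent line $l_t$ to the point $\tilde\gamma(t)$ with $\tilde\gamma(t)=N\mathbf u$, i.e.\ $\mathbf u=N\tilde\gamma(t)$. Substituting into $\mathbf u^{\top}M^{*}\mathbf u=0$ gives $\tilde\gamma(t)^{\top}\bigl(NM^{*}N\bigr)\tilde\gamma(t)=0$, so $\tilde\gamma$ lies on the conic with symmetric matrix $NM^{*}N$; being connected, it is a connected part of that conic, as claimed.
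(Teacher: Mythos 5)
Your parts (1) and (4) are sound: (1) is essentially the paper's own computation, and your homogeneous-coordinate argument for (4) (tangent lines of the conic $M$ satisfy $\mathbf u^{\top}M^{*}\mathbf u=0$, the pole--polar correlation is $N=\mathrm{diag}(1,1,-1)$, hence $\tilde\gamma$ lies on the conic $NM^{*}N$) is correct and in fact more self-contained than the paper's appeal to Pl\"ucker's formula. The genuine gap lies in the sign scaffolding on which you hang (2) and (3). You assert that strict convexity of $\gamma$ forces $E(t)=p'(t)q''(t)-p''(t)q'(t)\neq 0$ of constant sign. This is false: strict convexity here means every line meets the curve in at most two points, which permits isolated flat points. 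For instance, $\gamma(t)=(t,\,t^4-1)$ near $t=0$ satisfies every hypothesis of the lemma (regular, strictly convex, all tangents miss the origin, origin on the convex side, $D(t)=3t^4+1>0$), yet $E(0)=0$. At such a point your key formula $\tilde\gamma'=\frac{E}{D^2}(-q,p)$ gives $\tilde\gamma'=0$: the dual curve fails to be regular, your curvature numerator $E^2/D^3$ vanishes, and your placement test $g''(t)=-E(t)/D(t)<0$ becomes inconclusive, so the proof of (2) collapses exactly where it is needed. (A further weakness: even where $E\neq 0$, ``nonvanishing curvature of fixed sign'' is a local statement and does not by itself yield the global at-most-two-intersections property --- a spiral is the standard counterexample.) The paper avoids curvature altogether: strict convexity of $\gamma$, the origin-side hypothesis, and the normalization $D>0$ give, for all $s\neq t$,
\begin{equation*}
q'(t)\bigl(p(s)-p(t)\bigr)-p'(t)\bigl(q(s)-q(t)\bigr)<0,
\qquad\text{equivalently}\qquad
p(s)\,\tilde x(t)+q(s)\,\tilde y(t)<1,
\end{equation*}
and this single first-order inequality delivers both the half-plane statement and the strict convexity of $\tilde\gamma$ (three collinear points, or a self-intersection, of $\tilde\gamma$ would contradict it), with no regularity of $\tilde\gamma$ required.

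The second gap is that part (3) --- which you yourself call the technical heart --- is never actually proved: ``tracking the monotone rotation of the line family'' and ``a strictly convex arc is the envelope of its tangent lines'' are statements of intent, and the monotonicity you would track again rests on the unestablished sign of $E$. The paper settles (3) with a short set-theoretic argument needing only (1)--(2): writing $T$ for the region in the statement, one has $T\subseteq\bigcap_{t}H_{\gamma(t)}$ since $T$ lies on the origin's side of every ${\rm Polar}(\gamma(t))$; conversely, for $(u,v)\notin T$, the line through $(0,0)$ and $(u,v)$ must meet ${\rm Polar}(\gamma(t_0))$, ${\rm Polar}(\gamma(t_1))$, or some point $\tilde\gamma(t)$ of the dual arc, and in each case $(u,v)$ is separated from the origin by the corresponding boundary line $\partial H_{\gamma(t)}$, hence lies outside the intersection. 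To turn your proposal into a proof you would need either to supply an argument of this kind or to carry out your envelope plan in full after repairing the $E\neq 0$ issue, e.g.\ by replacing all second-order reasoning with the first-order inequality above.
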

\begin{proof}
Set $\gamma(t)=(p(t),q(t))$. Then the equation of $l_t$ is given by
\begin{equation} \label{eq-lt}
    q'(t)(x-p(t))-p'(t)(y-q(t))=0.
\end{equation}
Note that the given assumptions imply that $p(t)q'(t)-q(t)p'(t)\neq 0$ for all $t\in I$, so by continuity, we may assume $p(t)q'(t)-q(t)p'(t)>0$ for all $t\in I$ without loss of generality. In particular, the dual curve $\widetilde{\gamma}$ from \eqref{eq-dual-curve} is a well-defined smooth curve, and \eqref{eq-lt} implies that $\tilde{\gamma}(t)={\rm Pole}(l_t)$ for all $t\in I$. Let us write $\tilde{\gamma}(t)=(\tilde{x}(t),\tilde{y}(t))$ for simplicity and explain why the four conclusions (1)-(4) hold.

(1) It is enough to check that
\begin{align*}
    p(t)\tilde{x}(t)+q(t)\tilde{y}(t)&=1,\\
    p(t)\tilde{x}'(t)+q(t)\tilde{y}'(t)&=0.
\end{align*}
Indeed, the first equation comes from the fact that $\tilde{\gamma}(t)={\rm Pole}(l_t)$, and the second equation is obtained by differentiating the first equation and the identity $p'(t)\tilde{x}(t)+q'(t)\tilde{y}(t)\equiv 0$ from \eqref{eq-dual-curve}.

(2) Note that the strict convexity of $\gamma$ implies that
\begin{equation} \label{eq-strictconvex}
    q'(t)(p(s)-p(t))-p'(t)(q(s)-q(t))<0, \quad t\neq s\in I,
\end{equation}
which is equivalent to
\begin{equation} \label{eq-strictconvex2}
    p(s)\tilde{x}(t)+q(s)\tilde{y}(t)<1, \quad t\neq s\in I.
\end{equation}
Thus, both the origin and $\tilde{\gamma}$ are in the same half-plane with respect to ${\rm Polar}(\gamma(s))$. Moreover, \eqref{eq-strictconvex2} implies that $\tilde{\gamma}$ is strictly convex, and smoothness is immediate from the explicit description \eqref{eq-dual-curve} of $\tilde{\gamma}$.

(3) Let $T$ be the largest convex region containing $(0,0)$ bounded by two lines ${\rm Polar}(\gamma(t_0))$, ${\rm Polar}(\gamma(t_1))$, and the dual curve $\tilde{\gamma}|_{[t_0,t_1]}$. First, it is immediate to see that $T\subseteq \bigcap_{t\in [t_0,t_1]} H_{\gamma(t)}$. Indeed, $T$ should be contained in the same plane with $(0,0)$ with respect to each tangent line $l_t={\rm Pole}(\gamma(t))$ for all $t\in I$, and this implies $T\subseteq H_{\gamma(t)}$ for all $t\in [t_0,t_1]$. On the other side, let us pick an element $(u,v)\notin T$ and let $l$ be the straight line passing through the origin and $(u,v)$. Then $l$ intersects with one of ${\rm Polar}(\gamma(t_0))$, ${\rm Polar}(\gamma(t_1))$, and $\tilde{\gamma}|_{[t_0,t_1]}$. For the first two cases, $(u,v)$ and $(0,0)$ are not on the same half-plane with respect to either ${\rm Polar}(\gamma(t_0))$ or ${\rm Polar}(\gamma(t_1))$, so $(u,v)\notin \bigcap_{t\in [t_0,t_1]} H_{\gamma(t)}$. For the remaining case, if we suppose that $l$ contains certain $\tilde{\gamma}(t)$, then $(0,0)$ and $(u,v)$ are not on the same plane with respect to $H_{\gamma(t)}$. Hence, we can conclude that $T^c\subseteq \left (\bigcap_{t\in [t_0,t_1]} H_{\gamma(t)}\right )^c$, i.e. $\bigcap_{t\in [t_0,t_1]} H_{\gamma(t)}\subseteq T$.

(4) This is a direct consequence from Pl\"{u}cker's formula \cite[Section 9.1]{BK86}, which states that the degree of the dual curve $\tilde{\gamma}$ is $n(n-1)$ for any non-singular plane algebraic curve $\gamma$ of degree $n$ (in our case, $n=2$). Alternatively, more elementary arguments can be found in \cite{CG96, AZ07}.
\end{proof}

From now on, let us focus more on the special case $\gamma: [0,1]\to C_k$ from \eqref{eq-conic-Ck}. We may assume that $\gamma$ is extended to the smooth, regular, and strictly convex curve (still denoted by $\gamma$) on an open interval $I\supset [0,1]$ such that $f_k\circ \gamma\equiv 0$. Then Lemma \ref{lem-curve} (4) implies that there exists a quadratic polynomial $\tilde{f}_k(x,y)$ such that 
$$\tilde{f}_k\left ((\alpha^{-1} \circ \tilde{\gamma})(t)\right )\equiv 0.$$ 
Here, $\tilde{\gamma}$ is the dual curve of $\gamma$, and $\alpha$ is the linear isomorphism from \eqref{eq44}. A notable fact is that $\alpha^{-1}\circ \tilde{\gamma}$ always represents an ellipse. For this conclusion, the following lemma provides more concrete information on the quadratic polynomial $\tilde{f}_k(x,y)$.

\begin{lemma}\label{lem41}
The quadratic equation $\tilde{f}_k(x,y)=0$ holds for the following five points $(a_i,b_i)$ $(1\leq i\leq 5)$
    \begin{center}
      \small  $\left(-\frac{d}{k(d^2+d-2)}, \frac{d^2-kd+d-2k}{k(d^2+d-2)}\right), \left(\frac{d^2-kd+d-k-1}{d^2+d-2}, -\frac{d-k+1}{d^2+d-2}\right), \left(\frac{2kd-d^2+2k-d-2}{d^2+d-2}, \frac{2d-2k}{d^2+d-2}\right)$, $\left(\frac{k^2d+k^2+d-3k}{k(d^2+d-2)}, -\frac{(d-k+1)(d-k)}{k(d^2+d-2)}\right), \left(\frac{d}{3d-2k}, -\frac{2d-2k}{(d-1)(3d-2k)}\right)$, \normalsize 
    \end{center}
    with the associated tangent lines $l_i$ ($1\leq i \leq 5$)
    \begin{center}
    $(d+1)x+y=-\frac{1}{d-1}$,\quad  $x+(d+1)y=-\frac{1}{d-1}$,\quad  $x+(d+1)y=1$,
    
    $(d+1)x+y=\frac{kd-1}{d-1}$, \quad $x-(d-1)y=1$, 
\end{center}   
respectively. Furthermore, if $\frac{d}{2}<k<d$, the conic determined by the equation $\tilde{f}_k(x,y)=0$ is inscribed in the convex pentagon bounded by the above five tangent lines. In particular, the equation $\tilde{f}_k(x,y)=0$ should represent an ellipse.
\end{lemma}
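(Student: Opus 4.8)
The plan is to read off all the tangency data from the pole--polar duality already set up in Lemma~\ref{lem-curve}, and then to obtain the ellipse conclusion by trapping the conic inside a bounded pentagon. The first observation is that the five points
\[
(p_1,q_1),\dots,(p_5,q_5)=(1,0),\,(0,1),\,\Bigl(0,-\tfrac{1}{d-1}\Bigr),\,\Bigl(-\tfrac{1}{kd-1},0\Bigr),\,\Bigl(-\tfrac{2}{d^2+d-2},\tfrac{d}{d^2+d-2}\Bigr)
\]
all lie on the conic $f_k=0$, which is a direct substitution into \eqref{eq-conic1}. Since by construction $\tilde f_k=0$ is the image $\alpha^{-1}(\tilde\gamma)$ of the dual curve $\tilde\gamma$ of $f_k=0$, Lemma~\ref{lem-curve}(1) says that the tangent line to $\tilde f_k=0$ at $(\alpha^{-1}\circ\tilde\gamma)(t)$ is $\alpha^{-1}({\rm Polar}(\gamma(t)))$, and the text's identity $\tilde\gamma(t)={\rm Pole}(l_t)$ identifies the point of tangency. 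Using that $\alpha$ is self-adjoint, one has $\alpha^{-1}({\rm Polar}(P))={\rm Polar}(\alpha(P))$, so the tangent line corresponding to $(p_i,q_i)\in\{f_k=0\}$ is exactly ${\rm Polar}(\alpha(p_i,q_i))$, with point of tangency $\alpha^{-1}({\rm Pole}(m_i))$, where $m_i$ is the tangent line to $f_k=0$ at $(p_i,q_i)$.

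Then I would carry out two short computations. First, since $\alpha(p,q)=-(d-1)\bigl((d+1)p+q,\ p+(d+1)q\bigr)$, evaluating ${\rm Polar}(\alpha(p_i,q_i))$ reproduces the five lines $l_i$ of the statement. Second, writing $m_i$ through the gradient $\nabla f_k(p_i,q_i)$, taking its pole, and applying $\alpha^{-1}$ yields the points $(a_i,b_i)$; this simultaneously gives $\tilde f_k(a_i,b_i)=0$ and tangency of $l_i$ at $(a_i,b_i)$. For instance $(p_1,q_1)=(1,0)$ produces $l_1\colon (d+1)x+y=-\tfrac{1}{d-1}$ and the tangency point $\bigl(-\tfrac{d}{k(d^2+d-2)},\tfrac{d^2-kd+d-2k}{k(d^2+d-2)}\bigr)$, exactly as claimed. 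These verifications are mechanical linear algebra.

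For the ellipse assertion I would argue by inscription. The five lines split into two parallel pairs, $l_1\parallel l_4$ and $l_2\parallel l_3$, together with $l_5$; for $\tfrac{d}{2}<k<d$ one checks that they bound a genuine convex pentagon and that each $(a_i,b_i)$ lies on the corresponding finite edge. Because a conic meets any line in only two points and the double contact at $(a_i,b_i)$ already accounts for both, $\tilde f_k=0$ cannot cross any $l_i$ away from its contact point. Finally, the maximally mixed point $(0,0)$ lies in $S_1\subseteq S_k$, hence on the interior side of every $l_i$, and the curved part of $\partial S_k$ is an arc of $\tilde f_k=0$; thus the component of $\tilde f_k=0$ carrying the $(a_i,b_i)$ remains on the interior side of all five lines and is trapped inside the bounded pentagon, forcing it to be an ellipse.

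The step I expect to be the main obstacle is this last one. The tangency computations, though lengthy, are routine, whereas the ellipse conclusion needs genuine care: tangency to two parallel lines does not by itself rule out a hyperbola, since one branch can touch each parallel line, so the unbounded cases must be excluded explicitly. This is done by establishing convexity of the pentagon and the interior location of the five contact points, and it is precisely here that the hypothesis $\tfrac{d}{2}<k<d$ is indispensable; for $k$ outside this range the five lines degenerate or fail to enclose the conic.
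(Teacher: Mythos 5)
The computational half of your proposal coincides with the paper's proof: the paper also substitutes the five points $(1,0)$, $(0,1)$, $(0,-\frac{1}{d-1})$, $(-\frac{1}{kd-1},0)$, $(-\frac{2}{d^2+d-2},\frac{d}{d^2+d-2})$ into $f_k=0$, computes their duals by the gradient formula for the dual conic, and reads off the tangent lines as ${\rm Polar}(\alpha(p_i,q_i))$ via Lemma \ref{lem-curve}(1) and the symmetry of $\alpha$. The genuine problems are in your trapping argument for the ellipse conclusion. (i) Your claim that $(0,0)\in S_1\subseteq S_k$ places the origin ``on the interior side of every $l_i$'' silently assumes that $S_k$ lies on the interior side of every $l_i$. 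That is false for $l_2={\rm Polar}(\alpha(0,1))$: the pole $(0,1)$ lies on $f_k=0$ but \emph{not} in $P_k$ (it violates $p+(d+1)q\leq 1$), so Theorem \ref{thm-OOSch} gives no inclusion $S_k\subseteq H_{\alpha(0,1)}$; indeed the vertex $(0,-\frac{1}{d-1})$ of $S_k$ satisfies $x+(d+1)y=-\frac{d+1}{d-1}<-\frac{1}{d-1}$, i.e.\ it lies on the exterior side of $l_2$. The fact you actually need --- that the origin is strictly inside the pentagon --- is true, but it must be checked directly against the five line inequalities, not routed through $S_k$. (ii) The step ``the curved part of $\partial S_k$ is an arc of $\tilde f_k=0$'' is circular: that description of $S_k$ is precisely what Corollary \ref{cor41} and Theorem \ref{thm-OOSch2} later derive \emph{from} Lemma \ref{lem41}. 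Even granting it, that arc is $\alpha^{-1}\circ\tilde{\gamma}|_{[0,1]}$, whose contact points are only the two dual to $\gamma(0)=(-\frac{1}{kd-1},0)$ and $\gamma(1)=(-\frac{2}{d^2+d-2},\frac{d}{d^2+d-2})$; the other three points $(a_i,b_i)$ are dual to $(1,0)$, $(0,1)$, $(0,-\frac{1}{d-1})$ and lie elsewhere on the conic. So nothing you cite forces all five onto one connected component, yet the phrase ``the component of $\tilde f_k=0$ carrying the $(a_i,b_i)$'' presupposes exactly that --- and a hyperbola with contact points split between its two branches is precisely the scenario that must be excluded, not assumed away.

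The correct completion (this is what the paper compresses into ``this observation forces'') needs no reference to $S_k$ at all. Having verified that the five lines bound a convex pentagon and that each $(a_i,b_i)$ lies in the relative interior of the corresponding edge, argue as follows: by Bezout the conic meets $l_j$ only at $(a_j,b_j)$, so each connected component of $\{\tilde f_k=0\}$ lies in one closed half-plane bounded by $l_j$; if that component also contains some $(a_i,b_i)$ with $i\neq j$, which is \emph{strictly} on the pentagon side of $l_j$, then the component lies on the pentagon side of $l_j$. Hence any component containing at least two contact points is trapped in the closed, bounded pentagon. Since a conic has at most two connected components and there are five contact points, such a component exists; parabolas and hyperbolas have only unbounded components, so $\tilde f_k=0$ must be an ellipse. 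With items (i) and (ii) replaced by this direct argument, your proof matches the paper's.
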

\begin{proof}


Let us begin with the following expression
\begin{equation}\label{eq48}
    f_k(x,y)=Ax^2+Bxy+Cy^2+Dx+Ey+F
\end{equation} 
with the coefficients
    $$\begin{cases}A=kd-1, \quad B=-(d^3-kd^2-kd-d+2), \quad C=d-1, \\ D=-kd+2, \quad E=-d+2, \quad  F=-1,\end{cases}$$
from \eqref{eq-conic1}.
If we write $\gamma(t)=(p(t),q(t))$, then we have
    \begin{equation}\label{eq47}
        \tilde{\gamma}(t)=-\left(\frac{2Ap(t)+Bq(t)+D}{Dp(t)+Eq(t)+2F}, \frac{2Cq(t)+Bp(t)+E}{Dp(t)+Eq(t)+2F}\right)
    \end{equation}
thanks to \eqref{eq48} and \eqref{eq44}.

Recall that $\alpha^{-1}(\tilde{\gamma}(t))$ are solutions of the equation $\tilde{f}_k(x,y)=0$ for all $t\in I$. Thus, in order to single out five points $(a_i,b_i)$ satisfying $\tilde{f}_k(a_i,b_i)=0$, it is enough to note that the following five points $(p(t_i),q(t_i))$ ($1\leq i\leq 5$)
    \[(1,0), (0,1), \left(0,\frac{-1}{d-1}\right), \left(\frac{-1}{kd-1},0\right), \left(\frac{-2}{d^2+d-2},\frac{d}{d^2+d-2}\right)\]
are solutions to the equation $f_k(x,y)=0$. Then \eqref{eq47} provides us with the associated five points $(a_i,b_i)=\alpha^{-1}(\tilde{\gamma}(t_i))$ listed in the statement. Furthermore, the tangent lines $l_i$ at $(a_i,b_i)$ satisfying $\tilde{f}_k(a_i,b_i)=0$ are given by ${\rm Polar}(\alpha(p(t_i),q(t_i)))$, by Lemma \ref{lem-curve} (1). Thus, we can write down what the tangent lines are explicitly, as in the statement. Lastly, it is immediate to check that when $\frac{d}{2}<k<d$, those five tangent lines consist of a convex pentagon, and the corresponding points $(a_i,b_i)$ of tangency are on each of the pentagon's sides. This observation forces the quadratic equation $\tilde{f}_k(x,y)=0$ to represent an ellipse inscribed in this pentagon.
\end{proof}



\begin{remark} \label{rmk-ellipse}
While the dual quadratic equation $\tilde{f}_k(x,y)=0$ in our consideration always represents an ellipse thanks to Lemma \ref{lem41}, the quadratic equation $f_k(x,y)=0$ can represent both an ellipse and a hyperbola. For example, the quadratic equation  $f_k(x,y)=0$ for $d=5$ is given by
    $$(5k-1)p^2-(122-30k)pq+4q^2-(5k-2)p-3q-1=0,$$
and this represents a hyperbola if $k=3$ and an ellipse if $k=4$.
\end{remark}

Finally, we are ready to describe the intersection 
    $$\bigcap_{(p,q)\in C_k}H_{\alpha(p,q)}=\alpha^{-1}\bigg(\bigcap_{(p,q)\in C_k}H_{p,q}\bigg)$$
from \eqref{eq45}. Recall that $\left(\frac{-1}{kd-1},0\right)$ and $\left(\frac{-2}{d^2+d-2},\frac{d}{d^2+d-2}\right)$ are the two end-points $(p,q)$ of the connected conic arc $C_k$, and their associated points $(a,b)$ satisfying $\tilde{f}_k(a,b)=0$ are given by $\left(\frac{k^2d+k^2+d-3k}{k(d^2+d-2)}, -\frac{(d-k+1)(d-k)}{k(d^2+d-2)}\right)$ and $\left(\frac{d}{3d-2k}, -\frac{2d-2k}{(d-1)(3d-2k)}\right)$. Let us denote by $L$ the line segment between these two points, and let us assume (by changing the sign if necessary) that the inequality $\tilde{f}_k(x,y)\leq 0$ represents a filled ellipse.

\begin{corollary}\label{cor41}
Let $(a,b)\in \Real^2$. Then $(a,b)\in \displaystyle \bigcap_{(p,q)\in C_k}H_{\alpha(p,q)}$ if and only if $(a,b)$ satisfies $\tilde{f}_k(a,b)\leq 0$ or satisfies the following three conditions:
\begin{enumerate}
    \item $(d+1)a+b\leq  \frac{kd-1}{d-1}$,
    \item $a-(d-1)b \leq 1$, 
    \item $(3d-k+3)a-(kd+k-3)b-\frac{d^2+kd-k-3}{d-1}\leq 0$.
\end{enumerate}
\end{corollary}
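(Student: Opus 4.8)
The plan is to pin down the region $\mathcal{R}:=\bigcap_{(p,q)\in C_k}H_{\alpha(p,q)}$ as the convex set bounded by the ellipse $\{\tilde f_k=0\}$ together with its two tangent lines at the endpoints of the relevant arc, and then to translate that geometric picture into the stated inequalities. First I would use the reduction $\mathcal{R}=\alpha^{-1}\big(\bigcap_{t\in[0,1]}H_{\gamma(t)}\big)$ from \eqref{eq-OOSch} (valid since $\alpha$ is a symmetric linear isomorphism, so $\alpha^{-1}H_{p,q}=H_{\alpha(p,q)}$, and $C_k=\gamma([0,1])$). Applying Lemma \ref{lem-curve}(3) to the strictly convex arc $\gamma$ shows that $\bigcap_{t\in[0,1]}H_{\gamma(t)}$ is the largest convex region containing the origin whose boundary consists of the tangent lines ${\rm Polar}(\gamma(0))$, ${\rm Polar}(\gamma(1))$ and the dual arc $\tilde\gamma|_{[0,1]}$. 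Transporting this by $\alpha^{-1}$ and invoking Lemma \ref{lem41}, the dual arc is carried into the ellipse $\{\tilde f_k=0\}$ with endpoints $(a_4,b_4)$ and $(a_5,b_5)$, and the two bounding tangent lines become exactly the boundary lines $l_4$ and $l_5$ of inequalities (1) and (2).

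Next I would establish the containment $\{\tilde f_k\le 0\}\subseteq\mathcal{R}$, after fixing the sign of $\tilde f_k$ so that $\{\tilde f_k\le 0\}$ is the solid ellipse. This follows directly from \eqref{eq-strictconvex2} in the proof of Lemma \ref{lem-curve}: each half-plane $H_{\gamma(t)}$ contains the whole dual curve $\tilde\gamma$, hence $\bigcap_{t\in[0,1]}H_{\gamma(t)}$ contains the convex hull of $\tilde\gamma$, namely the solid dual ellipse, and applying $\alpha^{-1}$ yields $\{\tilde f_k\le 0\}\subseteq\mathcal{R}$.

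The heart of the argument is then a decomposition along the chord. Let $L$ be the line through $(a_4,b_4)$ and $(a_5,b_5)$; a direct substitution of these two points (from Lemma \ref{lem41}) verifies that $L$ is the boundary of inequality (3). Splitting the plane along $L$ into the \emph{apex side} (the side containing the vertex $l_4\cap l_5$) and the complementary \emph{far side} cut out by inequality (3), Lemma \ref{lem-curve}(3) tells us that $\partial\mathcal{R}$ is the ellipse arc on the apex side and the two tangent rays on the far side. Consequently $\mathcal{R}\cap\{\text{apex side}\}$ is the minor segment between $L$ and the arc, which lies inside $\{\tilde f_k\le 0\}$, whereas $\mathcal{R}\cap\{(3)\}$ is precisely the region cut out by $l_4$, $l_5$, and $L$, i.e. $\{(1),(2),(3)\}$. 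Combining, $\mathcal{R}=(\mathcal{R}\cap\text{apex})\cup(\mathcal{R}\cap\text{far})=(\{\tilde f_k\le 0\}\cap\text{apex})\cup\{(1),(2),(3)\}$, and since $\{\tilde f_k\le 0\}\cap\{(3)\}\subseteq\mathcal{R}\cap\{(3)\}=\{(1),(2),(3)\}$ by the previous step, this collapses to $\mathcal{R}=\{\tilde f_k\le 0\}\cup\{(1),(2),(3)\}$, as claimed.

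The hard part will be the bookkeeping in this last step: I must confirm that the arc bounds $\mathcal{R}$ exactly on the apex side — equivalently that the vertex $l_4\cap l_5$ sits strictly on the convex side of the arc — so that every interior constraint $H_{\gamma(t)}$ with $0<t<1$ becomes redundant once (1), (2), (3) are imposed. Settling the correct orientation (signs) of the three inequalities relative to the origin, together with the explicit equation of $L$, is where the hypothesis $\frac{d}{2}<k<d$ enters, since it is exactly this range that makes $\{\tilde f_k=0\}$ a genuine ellipse inscribed in the pentagon of Lemma \ref{lem41}; the remaining steps are routine substitutions.
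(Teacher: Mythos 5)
Your plan follows the paper's own proof essentially line by line: reduce to $\bigcap_{t\in[0,1]}H_{\gamma(t)}$ via $\alpha^{-1}H_{p,q}=H_{\alpha(p,q)}$, apply Lemma \ref{lem-curve}(3) together with Lemma \ref{lem41} to identify the region as the largest convex set containing the origin bounded by $l_4$, $l_5$ and the dual elliptic arc, and then decompose it as the union of the filled ellipse and the region cut out by $l_4$, $l_5$ and the chord of contact $L$. That decomposition logic is sound. The genuine gap is the step you declare routine: the claim that substituting $(a_4,b_4)$ and $(a_5,b_5)$ verifies that $L$ is the boundary line of inequality (3). Carrying out the substitution gives, for both $i=4$ and $i=5$,
\begin{equation*}
(3d-k+3)a_i-(kd+k-3)b_i=\frac{d^2+kd+k-3}{d-1},
\end{equation*}
which exceeds the constant $\frac{d^2+kd-k-3}{d-1}$ appearing in (3) by $\frac{2k}{d-1}$. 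Concretely, for $(d,k)=(4,3)$ one has $(a_4,b_4)=(\frac{20}{27},-\frac{1}{27})$ and $(a_5,b_5)=(\frac{2}{3},-\frac{1}{9})$, both lying on the line $a-b=\frac{7}{9}$, whereas (3) reads $a-b\le\frac{11}{18}$. Thus $L$ is parallel to, but strictly farther from the origin than, the boundary of (3), and your argument actually proves the corollary with the constant $\frac{d^2+kd+k-3}{d-1}$ in place of the printed one.

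This is not a slip you could have repaired while keeping the printed statement, because the statement as printed appears to be false: for $(d,k)=(4,3)$ the point $(a,b)=(0.72,0.03)$ lies in $\bigcap_{(p,q)\in C_k}H_{\alpha(p,q)}$ (the maximum of $\langle\alpha(p,q),(a,b)\rangle$ over the arc is $0.99$, attained at $\gamma(0)=(-\frac{1}{11},0)$), yet it satisfies neither alternative of the corollary: $\tilde f_3(a,b)>0$ (here $\tilde f_3$ is proportional to $27a^2+54ab+243b^2-18a-18b-1$, the unique conic through the five points of Lemma \ref{lem41}), and $12a-12b=8.28>\frac{22}{3}$, violating (3). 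Evidently the numerator in (3) should read $d^2+kd+k-3$; the paper's own proof, which ends with ``the conclusion follows immediately,'' makes the same tacit identification of $L$ with the line of (3) and therefore harbors the same error, so your proposal and the paper stand or fall together here. A second, much smaller, inaccuracy in your write-up: the convex hull of $\tilde\gamma$ is not the solid dual ellipse, since $\tilde\gamma$ is only an arc of it; the inclusion $\{\tilde f_k\le 0\}\subseteq \bigcap_{(p,q)\in C_k}H_{\alpha(p,q)}$ does hold, but rather because each line $\partial H_{\gamma(t)}$ is tangent to the dual conic, so that by convexity the whole filled ellipse lies on the same side of it as the dual arc, i.e.\ on the origin's side.
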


\begin{proof}
   By Lemma \ref{lem-curve} (3) and Lemma \ref{lem41}, the intersection $\displaystyle \bigcap_{(p,q)\in C_k}H_{\alpha(p,q)}$ is the largest convex region containing $(0,0)$ bounded by the two tangent lines $(d+1)x+y=\frac{kd-1}{d-1}$, $x-(d-1)y=1$ and the dual curve $\tilde{\gamma}|_{[0,1]}$. We refer the readers to Figure \ref{fig-intersection1} for a visualized understanding. Note that  $\tilde{f}_k(x,y)\leq 0$ represents a filled ellipse which we denote by $E$, and $E$ is a subset of the intersection $\displaystyle \bigcap_{(p,q)\in C_k}H_{\alpha(p,q)}$ by Lemma \ref{lem41}. Furthermore, $\displaystyle \bigcap_{(p,q)\in C_k}H_{\alpha(p,q)}\setminus E$ is a subset of the largest convex region bounded by the two tangent lines $(d+1)x+y=\frac{kd-1}{d-1}$, $x-(d-1)y=1$ and the line segment $L$ between $\left(\frac{k^2d+k^2+d-3k}{k(d^2+d-2)}, -\frac{(d-k+1)(d-k)}{k(d^2+d-2)}\right)$ and $\left(\frac{d}{3d-2k}, -\frac{2d-2k}{(d-1)(3d-2k)}\right)$. Hence, the conclusion follows immediately.
\end{proof}

Now we are ready to complete the proof for the cases $\frac{d}{2}<k<d$.

\begin{theorem} \label{thm-OOSch2}
Let $\rho_{a,b}$ be a bipartite matrix of the form \eqref{eq-OOInv} and $\frac{d}{2}<k< d$. Then $\rho_{a,b}\in \sch_k$ if and only if $\tilde{f}_k(a,b)\leq 0$ or $(a,b)$ satisfies the following inequalities:
$$\begin{cases}
    -\frac{1}{d-1}\leq (d+1)a+b\leq  \frac{kd-1}{d-1},\\
    a+(d+1)b\leq 1,\\
    a-(d-1)b \leq 1,\\
    (3d-k+3)a-(kd+k-3)b-\frac{d^2+kd-k-3}{d-1}\leq 0.
\end{cases}$$
Here, $\tilde{f}_k$ is the quadratic polynomial from Lemma \ref{lem41} such that the inequality $\tilde{f}_k(x,y)\leq 0$ represents a filled ellipse.
\end{theorem}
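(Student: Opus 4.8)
The plan is to read off $S_k$ from the formula $S_k=\bigcap_{(p,q)\in{\rm Ext}(P_k)}H_{\alpha(p,q)}$ of Theorem~\ref{thm-OOSch} (equivalently \eqref{eq-OOSch}), exploiting that for $\frac{d}{2}<k<d$ the set ${\rm Ext}(P_k)$ splits, by Theorem~\ref{thm-OOkpos}(3), into the four isolated points $(1,0)$, $(0,-\frac{1}{d-1})$, $(-\frac{2}{d^2+d-2},\frac{d}{d^2+d-2})$, $(-\frac{1}{kd-1},0)$ and the conic arc $C_k$. Accordingly I would write $S_k=A\cap B$, where $A$ collects the four half-planes coming from the isolated points and $B:=\bigcap_{(p,q)\in C_k}H_{\alpha(p,q)}$ is the contribution of the arc. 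The proof then has three parts: computing $A$, importing the description of $B$, and a short set-theoretic simplification.

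First I would compute $A$. Applying the linear isomorphism $\alpha$ of \eqref{eq44} to each isolated point and reading off $H_{\alpha(p,q)}$ is a routine multiplication by $\begin{pmatrix} d+1 & 1\\ 1 & d+1\end{pmatrix}$; the four points give respectively the half-planes
$$R_1:(d+1)a+b\geq -\tfrac{1}{d-1},\quad R_3:a+(d+1)b\leq 1,\quad R_4:a-(d-1)b\leq 1,\quad R_2:(d+1)a+b\leq\tfrac{kd-1}{d-1}.$$
Thus $A=R_1\cap R_2\cap R_3\cap R_4$ is exactly the region cut out by the first three lines of the asserted system.

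Next I would invoke Corollary~\ref{cor41}, which already characterizes $B$: writing $E:=\set{(a,b):\tilde{f}_k(a,b)\leq 0}$ and letting $R_5$ denote the last inequality of the theorem, one has $B=E\cup(R_2\cap R_4\cap R_5)$. Distributing the intersection then gives
$$S_k=A\cap B=(A\cap E)\cup(R_1\cap R_2\cap R_3\cap R_4\cap R_5),$$
because $A\cap(R_2\cap R_4\cap R_5)=R_1\cap R_2\cap R_3\cap R_4\cap R_5$. Comparing with the claimed formula $S_k=E\cup(R_1\cap\cdots\cap R_5)$, it remains only to upgrade $A\cap E$ to $E$, i.e.\ to establish the containment $E\subseteq A$.

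This containment is the one genuinely substantive step, and it is supplied by the inscribed-ellipse statement of Lemma~\ref{lem41}. That lemma exhibits $E$ as a filled ellipse inscribed in the convex pentagon bounded by the five tangent lines $l_1,\dots,l_5$, and four of those lines — namely $l_1:(d+1)x+y=-\frac{1}{d-1}$, $l_3:x+(d+1)y=1$, $l_4:(d+1)x+y=\frac{kd-1}{d-1}$, and $l_5:x-(d-1)y=1$ — are precisely the boundaries of $R_1,R_3,R_2,R_4$. Since $E$ lies in the pentagon, it lies on one side of each of these four lines; to see the side is the one prescribed by the inequalities I would note that the origin lies in the pentagon's interior and satisfies $R_1,R_2,R_3,R_4$ strictly, which fixes the orientation and yields $E\subseteq R_1\cap R_2\cap R_3\cap R_4=A$. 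Hence $A\cap E=E$, and $S_k=E\cup(R_1\cap\cdots\cap R_5)$, exactly as stated. I expect this matching of the four tangent lines of Lemma~\ref{lem41} with the four isolated-point half-planes — rather than any of the preceding computations, which are mechanical — to be the crux of the argument.
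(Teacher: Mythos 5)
Your proof is correct and follows essentially the same route as the paper's: both start from $S_k=\bigcap_{(p,q)\in{\rm Ext}(P_k)}H_{\alpha(p,q)}$, split ${\rm Ext}(P_k)$ into the four isolated points and the arc $C_k$, import Corollary \ref{cor41} for the arc contribution, and invoke the inscribed-ellipse statement of Lemma \ref{lem41} to absorb the filled ellipse into the linear constraints. The only (immaterial) difference is bookkeeping: the paper first discards the two half-planes coming from the endpoints of $C_k$ as redundant (they are already implied by the intersection over the arc) and so only needs $E$ to lie inside two of the pentagon's sides, whereas you retain all four isolated-point half-planes and absorb them at the end via the containment $E\subseteq A$.
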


\begin{proof}
Since $\displaystyle \bigcap_{(p,q)\in C_k}H_{p,q}\subseteq H_{-\frac{2}{d^2+d-2},\frac{d}{d^2+d-2}}\cap H_{-\frac{1}{kd-1},0}$, we have
    \begin{align}
   \tilde{S}_k:=\alpha(S_k)= H_{1,0}\cap H_{0,-\frac{1}{d-1}}\cap \bigcap_{(p,q)\in C_k}H_{p,q}
    \end{align}
from \eqref{eq45}. Thus, we have 
\begin{align*}
    S_k&=\alpha^{-1}(\tilde{S}_k)=\alpha^{-1}\left (H_{1,0}\right )\cap \alpha^{-1} \left ( H_{0,-\frac{1}{d-1}}\right ) \cap \alpha^{-1}\left ( \bigcap_{(p,q)\in C_k}H_{p,q} \right )\\
    &=H_{-(d^2-1),-(d-1)}\cap H_{1,d+1} \cap \bigcap_{(p,q)\in C_k}H_{\alpha(p,q)}.
\end{align*}
and the conclusion follows immediately from Lemma \ref{lem41} and Corollary \ref{cor41}.

\end{proof}







\begin{remark}
It is worth remarking that a small perturbation can produce a drastic increment of the Schmidt number. Recall that $S_d$ is a triangle, and let us parameterize the southern-eastern edge of $S_d$ by $\eta:[0,1]\rightarrow S_d$ such that $\eta(0)=(0,-\frac{1}{d-1})$ and $\eta(1)=(1,0)$. Then $\rho_{\eta(t)}$ is always entangled, and the Schmidt numbers ${\rm{SN}}(\rho_{\eta(t)})$ exhibit a monotonically increasing pattern  of 
\begin{center}
    $2,\lceil \frac{d}{2}\rceil, \lceil \frac{d}{2}\rceil+1,\lceil \frac{d}{2}\rceil+2,\cdots, d,$
\end{center} 
as $t$ increases from $0$ to $1$. Note that there is a huge gap between $2$ and $\lceil \frac{d}{2}\rceil$, which seems entirely new and highly non-trivial. This phenomenon does not appear on the other line segments in the boundary of $S_d$, and some other known cases such as $\rho_{a,0}$ and $\rho_{0,b}$. The only known patterns were $1,2,3,\cdots,d$ (isotropic states) or $1,2$ (Werner states) to our best knowledge.
\end{remark}

\emph{Acknowledgements}: The authors thank Professor Seung-Hyeok Kye for the helpful discussions and comments. S.-J.Park and S.-G.Youn were supported by the National Research Foundation of Korea (NRF) grant funded by the Ministry of Science and ICT (MSIT) (No.2021K1A3A1A21039365). S.-G.Youn was also supported by Samsung Science and Technology Foundation under Project Number SSTF-BA2002-01 and by the National Research Foundation of Korea (NRF) grant funded by the Ministry of Science and ICT (MSIT) (No. 2020R1C1C1A01009681).

\bibliographystyle{alpha}
\bibliography{Youn23}

\end{document}